\newcommand{\Suff}{\textit{Suff}}
\newcommand{\Pref}{\textit{Pref}}
\newcommand{\Fact}{\textit{Fact}}
\newcommand{\St}{\textit{St}}
\newcommand{\LS}{\textit{LS}}
\newcommand{\RS}{\textit{RS}}
\newcommand{\BS}{\textit{BS}}
\newcommand{\SBS}{\textit{SBS}}
\newcommand{\WBS}{\textit{NBS}}
\newcommand{\MF}{\textit{MF}_{St}}
\newcommand{\M}{\textit{MF}_{L}}
\renewcommand{\epsilon}{\varepsilon}
\newtheorem{theorem}{Theorem}[section]
\newtheorem{proposition}[theorem]{Proposition}
\newtheorem{lemma}[theorem]{Lemma}
\newtheorem{corollary}[theorem]{Corollary}
\theoremstyle{definition}
\newtheorem{definition}{Definition}
\newtheorem{example}{Example}
\newtheorem{remark}{Remark}
\begin{document}

\begin{frontmatter}

%%%%%%%%%%%%%%%%%%%%%%%%%%%%%%%%%%%%%%%%%%%%%%%%%%%%%%%%%%%%%%%%%%%%%%%%%%%%%%%%%%%%%%%%%%%
\title{ \textbf{ On the Structure of Bispecial Sturmian Words }\tnoteref{note1} }
%%%%%%%%%%%%%%%%%%%%%%%%%%%%%%%%%%%%%%%%%%%%%%%%%%%%%%%%%%%%%%%%%%%%%%%%%%%%%%%%%%%%%%%%%%%

\tnotetext[note1]{A preliminary version of this paper was presented at the \emph{37th
  International Symposium on Mathematical Foundations of Computer Science, MFCS 2012} \cite{Fi12}.}

\author{Gabriele Fici}
\ead{gabriele.fici@unipa.it}
\address{Dipartimento di Matematica e Informatica, Universit\`a di Palermo\\Via Archirafi 34, 90123 Palermo, Italy}

\journal{Journal of Computer and System Sciences}

\begin{abstract}
A balanced word is one in which any two factors of the same length contain the same number of each letter of the alphabet up to one. Finite binary balanced words are called Sturmian words. A Sturmian word is bispecial if it can be extended to the left and to the right with both letters remaining a Sturmian word. There is a deep relation between bispecial Sturmian words and Christoffel words, that are the digital approximations of Euclidean segments in the plane. In 1997, J. Berstel and A. de Luca proved that \emph{palindromic} bispecial Sturmian words are precisely the maximal internal factors of \emph{primitive} Christoffel words. We extend this result by showing that bispecial Sturmian words are precisely the maximal internal factors of \emph{all} Christoffel words. Our characterization allows us to give an enumerative formula for bispecial Sturmian words. We also investigate the minimal forbidden words for the language of Sturmian words.
\end{abstract}

\begin{keyword}
Sturmian words, Christoffel words, special factors, minimal forbidden words, enumerative formula.
\end{keyword}

\end{frontmatter}

%%%%%%%%%%%%%%%%%%%%%%%%%%%%%%%%%%%%%%%%%%%%%%%%%%%%%%%%%%%%%%%%%%%%%%%%%%%%%%%%%%%%%%%%%%%%%
\section{Introduction}\label{sec:intro}
%%%%%%%%%%%%%%%%%%%%%%%%%%%%%%%%%%%%%%%%%%%%%%%%%%%%%%%%%%%%%%%%%%%%%%%%%%%%%%%%%%%%%%%%%%%%%

A word $w$ is balanced if and only if for any $u,v$ factors of $w$ of the same length, and for any letter $a$, one has $||u|_{a}-|v|_{a}|\le 1$, where $|z|_{a}$ denotes the number of $a$'s in the word $z$. 

Balanced words appear in several problems in Computer Science. For example, Altman, Gaujal and  Hordijk \cite{AlGaHo00} proved that balanced words are optimal sequences for some classes of scheduling problems, such as routing among several systems. An interesting problem arising in this context is that of constructing infinite balanced words with assigned frequencies of letters. There is a conjecture of A. S. Fraenkel  \cite{Fr73}, originally stated in the context of Number Theory, that is equivalent to the following: for any fixed $k > 2$, there is only one infinite balanced  word (up to letter permutation) over an alphabet of size $k$, in which all letters have different frequencies, and this word is periodic. The Fraenkel Conjecture has been proved true for small alphabet sizes (see \cite{Vu03,BaVa03} and references therein), but the general problem remains open. 

For any alphabet $\Sigma$ of size at least two, there exist infinite words over $\Sigma$ that are balanced and aperiodic. When $|\Sigma|=2$, these are called infinite Sturmian words. Sturmian words are very rich from the combinatorial point of view, and because of this fact they have a lot of equivalent definitions and characterizations (see, as a classical reference, \cite[Chapter~2]{LothaireAlg}).
However, if the Fraenkel Conjecture is true for every $k>2$, the only balanced infinite words that are aperiodic and have different letter frequencies are the infinite Sturmian words.

A finite Sturmian word (or, briefly, a Sturmian word) is a factor of some infinite Sturmian word. The set $\St$ of Sturmian words therefore coincides with the set of binary balanced finite words.

If one considers extendibility within the set $\St$ of Sturmian words, one can define left special Sturmian words (resp.~right special Sturmian words) \cite{DelMi94} as those words $w$ over the alphabet $\Sigma=\{a,b\}$ such that $aw$ and $bw$ (resp.~$wa$ and $wb$) are both Sturmian words. For example, the word $aab$ is left special since $aaab$ and $baab$ are both Sturmian words, but is not right special since $aabb$ is not a Sturmian word.

Left special Sturmian words are precisely the binary words having suffix automaton\footnote{The suffix automaton of a finite word $w$ is the minimal deterministic finite state automaton accepting the set of suffixes of $w$.} with minimal state complexity (cf.~\cite{SciZa07,Fi10b}). From combinatorial considerations one has that right special Sturmian words are the reversals of left special Sturmian words.

The Sturmian words that are both left and right special are called bispecial Sturmian words. They are of two kinds: strictly bispecial Sturmian words, that are the words $w$ such that $awa$, $awb$, $bwa$ and $bwb$ are all Sturmian words (e.g.\ $aa$), or non-strictly bispecial Sturmian words otherwise (e.g.\ $ab$). Strictly bispecial Sturmian words are also called central words, and have been deeply studied (see for example~\cite{DelMi94,CarDel05}) because they constitute the kernel of the theory of Sturmian words. Non-strictly bispecial Sturmian words, instead, received less attention.

One important field in which Sturmian words arise naturally is Discrete Geometry. Indeed, infinite Sturmian words can be viewed as the digital approximations of Euclidean straight lines in the plane. It is known that given a point $(p,q)$ in the grid $\mathbb{Z} \times \mathbb{Z}$, with $p,q>0$, there exists a unique path that approximates from below (resp.~from above) the Euclidean segment joining the origin $(0,0)$ to the point $(p,q)$. If one encodes horizontal and vertical unitary segments with the letters $a$ and $b$ respectively, this path is called the lower (resp.~upper) Christoffel word\footnote{Some authors require that $p$ and $q$ be coprime in the definition of Christoffel word. Here we follow the definition given in \cite{BeDel97} and do not require this condition.} associated to the pair $(p,q)$, and is denoted by $w_{p,q}$ (resp.~$w'_{p,q}$). By elementary geometrical considerations, one has that for any $p,q>0$, $w_{p,q}=aub$ for some word $u$, and $w'_{p,q}=b\tilde{u}a$, where $\tilde{u}$ is the reversal of  $u$. If (and only if) $p$ and $q$ are coprime, the Christoffel words $w_{p,q}$ and $w'_{p,q}$ are primitive (that is, they are not a concatenation of copies of a shorter word). 

A well known result of Jean Berstel and Aldo de Luca \cite{BeDel97} is that a word $u$ is a strictly bispecial Sturmian word if and only if $aub$ is a primitive lower Christoffel word (or, equivalently, if and only if $bua$ is a primitive upper Christoffel word). As a main result of this paper, we show that this correspondence holds in general between bispecial Sturmian words and Christoffel words. More precisely, we prove (in Theorem \ref{theor:main}) that $u$ is a bispecial Sturmian word if and only if there exist letters $x,y$ in $\{a,b\}$ such that $xuy$ is a Christoffel word. 

This characterization allows us to prove an enumerative formula for bispecial Sturmian words (Corollary \ref{cor:formula}): there are exactly $2n+2-\phi(n+2)$ bispecial Sturmian words of length $n$, where $\phi$ is the Euler totient function, i.e., $\phi(n)$ is the number of positive integers smaller than or equal to $n$ and coprime with $n$. Surprisingly, enumerative formulae for left special, right special and strictly bispecial Sturmian words were known \cite{DelMi94}, but to the best of our knowledge we exhibit the first proof of an enumerative formula for non-strictly bispecial (and therefore for bispecial) Sturmian words.

We then investigate the minimal forbidden words for the set of finite Sturmian words. Recall that the set of minimal forbidden words of a factorial language is the set of words of minimal length that do not belong to the language \cite{MiReSci02}. More precisely, given a factorial language $L$ over an alphabet $\Sigma$, a word $v=v_{1}v_{2}\cdots v_{n}$, with $v_{i}\in \Sigma$, is a minimal forbidden word for $L$ if $v_{1}\cdots v_{n-1}$ and $v_{2}\cdots v_{n}$ are in $L$, but $v$ is not.

Minimal forbidden words represent a powerful tool to investigate the structure of a factorial language (see~\cite{BeMiRe96,BeMiReSc00,CrMiRe98}), such as the language of factors of a (finite or infinite) word, or of a set of words. They also appear in different contexts in Computer Science, such as symbolic dynamics \cite{BeMiReSc00}, data compression (where the set of minimal forbidden words is often called an antidictionary) \cite{CrMiReSa99}, or bio-informatics (where they are also called minimal absent words) \cite{ChCr12}.

We give a characterization of minimal forbidden words for the language $\St$ of Sturmian words in Theorem \ref{theor:mf}. We show that they are precisely the words of the form $ywx$ such that $xwy$ is a non-primitive Christoffel word, where $\{x,y\}=\{a,b\}$. This characterization allows us to give an enumerative formula for the set of minimal forbidden words of $\St$ (Corollary \ref{cor:formulamf}):  there are exactly $2(n-1-\phi(n))$ minimal forbidden words of length $n$ for every $n>1$.

The paper is organized as follows. In Section~\ref{sec:wsf} we recall standard definitions on words and factors. In Section~\ref{sec:StCh} we deal with Sturmian words and Christoffel words, and present our main result, and in Section~\ref{sec:En} we give an enumerative formula for bispecial Sturmian words. Finally, in Section~\ref{sec:MF}, we investigate minimal forbidden words for the language of finite Sturmian words. 

%%%%%%%%%%%%%%%%%%%%%%%%%%%%%%%%%%%%%%%%%%%%%%%%%%%%%%%%%%%%%%%%%%%%%%%%%%%%%%%%%%%%%%%%%%%
\section{Words and special factors}\label{sec:wsf}
%%%%%%%%%%%%%%%%%%%%%%%%%%%%%%%%%%%%%%%%%%%%%%%%%%%%%%%%%%%%%%%%%%%%%%%%%%%%%%%%%%%%%%%%%%%

We give here basic definitions on words and fix the notation.

An alphabet, denoted by $\Sigma$, is a finite set of symbols, called letters. A word over $\Sigma$ is a finite sequence of letters from $\Sigma$. The length of a word $w$ is denoted by $|w|$. The only word of length $0$ is called the empty word and is denoted by $\epsilon$.  The set of all words over $\Sigma$ is denoted by $\Sigma^*$. The set of all words over $\Sigma$ having length $n$ is denoted by $\Sigma^n$. Any subset $X$ of $\Sigma^{*}$ is called a language, and we note $X(n)=|X\cap \Sigma^{n}|$ the set of words of length $n$ in $X$. 

Given a non-empty word $w$, we denote its $i$-th letter by $w[i]$, $1 \le i \le |w|$. The reversal of the word $w=w[1]w[2]\cdots w[n]$ is the word $\tilde{w}=w[n]w[n-1]\cdots w[1]$. We set $\tilde{\epsilon}=\epsilon$. A palindrome is a word $w$ such that $\tilde{w}=w$. A word is called a power if it is the concatenation of copies of another word; otherwise it is called primitive. For a letter $a\in \Sigma$, $|w|_{a}$ is the number of $a$'s occurring in $w$. A positive integer $p$ is a period of a word $w$ if $p>|w|$ or $w[i]=w[i+p]$ for every $i=1,\ldots ,|w|-p$. For a word $au$, $a\in \Sigma$, $u\in \Sigma^{*}$, we define $\rho(au)=ua$. The set of rotations of a word $w$ of length $n$ is the set $\{\rho^{i}(w) \mid 1\le i \le n\}$. Note that the rotations of a word $w$ are all different if and only if $w$ is primitive.

A word $z$ is a factor of a word $w$ if $w=uzv$ for some $u,v\in \Sigma^{*}$. In the special case $u = \varepsilon $ (resp.~$v = \varepsilon $), we call $z$ a prefix (resp.~a suffix) of $w$. We let $\Pref(w)$, $\Suff(w)$ and $\Fact(w)$ denote the set of prefixes, suffixes and factors of the word $w$, respectively. The factor complexity of a word $w$ is the integer function $f_{w}(n)=|\Fact(w)\cap \Sigma^n|$,  $n\geq 0$. 

A factor $u$ of a (finite or infinite) word $w$ is called left special (resp.~right special) in $w$ if there exist $a,b\in \Sigma$, $a\neq b$, such that $au,bu\in \Fact(w)$ (resp.~$ua,ub\in \Fact(w)$). A bispecial factor is a factor that is both left and right special. Moreover, a bispecial factor $u$ of a word $w$ is strictly bispecial if $xuy$ is a factor of $w$ for every $x,y\in \Sigma$; otherwise $u$ is non-strictly bispecial. For example, let $w= aababba$. The left special factors of $w$ are $\epsilon$, $a$, $ab$, $b$ and $ba$. The right special factors of $w$ are $\epsilon$, $a$, $ab$ and $b$. Therefore, the bispecial factors of $w$ are $\epsilon$, $a$, $ab$ and $b$. Among these, only $\epsilon$ is strictly bispecial.

%%%%%%%%%%%%%%%%%%%%%%%%%%%%%%%%%%%%%%%%%%%%%%%%%%%%%%%%%%%%%%%%%%%%%%%%%%%%%%%%%%%%%%%%%%%%%%
\section{Sturmian words and Christoffel words}\label{sec:StCh}
%%%%%%%%%%%%%%%%%%%%%%%%%%%%%%%%%%%%%%%%%%%%%%%%%%%%%%%%%%%%%%%%%%%%%%%%%%%%%%%%%%%%%%%%%%%%%%

In the rest of the paper, unless otherwise specified, we fix the alphabet $\Sigma=\{a,b\}$.

A word $w\in \Sigma^{*}$ is called Sturmian if it is balanced, i.e., if for any $u,v$ factors of $w$ of the same length, one has $||u|_{a}-|v|_{a}|\le 1$ (or, equivalently, $||u|_{b}-|v|_{b}|\le 1$). We let $\St$ denote the set of Sturmian words. The language $\St$ is factorial (i.e., if $w=uv\in \St$, then $u,v\in \St$) and extendible (i.e., for every $w\in \St$, there exist letters $x,y\in \Sigma$ such that $xwy\in \St$). 

The following definitions are in \cite{DelMi94}. 

\begin{definition}
A Sturmian word $w\in \St$ is  left special (resp.~right special) if and only if $aw$, $bw\in \St$ (resp.~if $wa$, $wb\in \St$). A bispecial Sturmian word is a Sturmian word that is both left and right special. Moreover, a bispecial Sturmian word is  strictly bispecial if and only if $awa$, $awb$, $bwa$ and $bwb$ are all Sturmian word; otherwise it is non-strictly bispecial. 
\end{definition}

\begin{remark}
 The definition of special Sturmian word is different from the (widely studied) definition of special \emph{factor} of an infinite Sturmian word (see \cite[Definition 10]{DelMi94}). Actually, a word is a bispecial factor of some infinite Sturmian word if and only if it is a strictly bispecial Sturmian word. 
\end{remark}

We let $\LS$, $\RS$, $\BS$, $\SBS$ and $\WBS$ denote the sets of left special, right special, bispecial, strictly bispecial and non-strictly bispecial Sturmian words, respectively. Thus, one has $\BS=\LS\cap \RS=\SBS\cup \WBS$.

The following lemma is a reformulation of a result of Aldo de Luca \cite{Del97}.

\begin{lemma}\label{lem:prefsuf}
Let $w$ be a word over $\Sigma$. Then $w\in \LS$ (resp.~$w\in \RS$) if and only if $w$ is a prefix (resp.~a suffix) of a word in $\SBS$.
\end{lemma}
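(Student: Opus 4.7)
The plan is to prove the statement for $\LS$ and prefixes; the statement for $\RS$ and suffixes then follows by reversal, using that reversal preserves $\St$, swaps $\LS$ with $\RS$ and prefixes with suffixes, and fixes $\SBS$ (central words being palindromes).

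The easy direction is straightforward. If $w$ is a prefix of some $s \in \SBS$, write $s=wz$. Strict bispecialness of $s$ gives $as,bs \in \St$, and factoriality of $\St$ forces the prefixes $aw$ of $as$ and $bw$ of $bs$ to lie in $\St$; hence $w \in \LS$.

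For the converse, assume $w \in \LS$, i.e.\ $aw,bw \in \St$. Here I would invoke the de~Luca theorem from \cite{Del97} being reformulated, which identifies the prefixes of central words with the left special factors of infinite Sturmian words. Granting this identification, it suffices to produce a single infinite Sturmian word whose factor set contains both $aw$ and $bw$: then $w$ is a left special factor of that word, hence a prefix of one of its palindromic bispecial prefixes (a central word), which is precisely an element of $\SBS$.

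The main obstacle is producing this common infinite Sturmian word. A clean way is via the slope parametrization of infinite Sturmian words: each $u \in \St$ is a factor of exactly those infinite Sturmian words whose slope $\alpha$ lies in a certain nonempty closed sub-interval $I_u \subseteq [0,1]$, and the assumption $aw,bw \in \St$ translates into $I_{aw}$ and $I_{bw}$ being nonempty with $I_w = I_{aw} \cup I_{bw}$; any slope at their interface yields an infinite Sturmian word in which $w$ is left special. An alternative, more combinatorial route is a direct induction on a carefully chosen parameter: show that every $w \in \LS \setminus \SBS$ admits a right extension $wx \in \LS$, and iterate this extension process until an element of $\SBS$ is reached, arguing that the process must terminate because indefinite extension would violate the balance condition of $\St$.
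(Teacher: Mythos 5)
First, note that the paper does not actually prove this lemma: it is stated as a reformulation of a result of de~Luca and justified only by the citation \cite{Del97}. So to the extent that your plan is to reduce the hard direction to that theorem, it coincides with what the paper does; your treatment of the easy direction (a prefix $w$ of $s\in\SBS$ satisfies $aw,bw\in\St$ because $as,bs\in\St$ and $\St$ is factorial) and of the reversal symmetry is correct.

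Taken as a self-contained proof, however, the converse direction has genuine gaps in both of the routes you sketch. In the slope-interval route, the conclusion needs a point of $I_{aw}\cap I_{bw}$ that is the slope of an \emph{aperiodic} infinite Sturmian word; since the endpoints of these intervals are rational, you must either rule out that the two closed intervals meet in a single rational point (where the corresponding balanced infinite words are periodic), or pass to nearby irrational slopes and argue that both $aw$ and $bw$ survive as factors --- neither is addressed. Moreover, the final step ``$w$ is left special in an infinite Sturmian word, hence a prefix of one of its bispecial factors'' silently uses that the left special factors of such a word are unique for each length and nested, and that its bispecial factors occur at unboundedly many lengths; this is essentially the content of de~Luca's theorem, so the route is close to circular. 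In the inductive route, the termination argument is simply false: left special Sturmian words can be extended to the right indefinitely while remaining left special (e.g.\ $a^{n}\in\LS$ for every $n$), so ``indefinite extension would violate balance'' proves nothing. What you actually need is that the chain of right extensions passes through an element of $\SBS$, and that is the whole difficulty. The standard repair is via palindromic closure --- show that for $w\in\LS$ the word $w^{(+)}$ is central and admits $w$ as a prefix --- which is exactly the technique the paper deploys in the proof of Lemma~\ref{lem:rpl} for the bispecial case.
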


Given a bispecial Sturmian word, the simplest criterion to determine if it is strictly or non-strictly bispecial  is provided by the following nice characterization \cite{DelMi94}:

\begin{proposition}\label{prop:sturmstrispe}
 A bispecial Sturmian word is strictly bispecial if and only if it is a palindrome.
\end{proposition}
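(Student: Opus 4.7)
The plan is to translate both implications into explicit count equalities. For $0\le\ell\le|w|$, let $F_\ell=|w[1..\ell]|_a$ denote the number of $a$'s in the length-$\ell$ prefix of $w$, and $G_\ell=|w[|w|-\ell+1..|w|]|_a$ the number of $a$'s in its length-$\ell$ suffix. Since $F_\ell-F_{\ell-1}=1$ iff $w[\ell]=a$ and $G_\ell-G_{\ell-1}=1$ iff $w[|w|-\ell+1]=a$, the condition $F_\ell=G_\ell$ for every $\ell$ is equivalent to $w=\tilde{w}$.

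For the direction $(\Rightarrow)$, assume $w\in\SBS$. For each $(x,y)\in\{a,b\}^2$ and each integer $k$ with $1\le k\le|w|+1$, the prefix $P=x\cdot w[1..k-1]$ and the suffix $S=w[|w|-k+2..|w|]\cdot y$ are length-$k$ factors of $xwy$, with $|P|_a$ equal to $F_{k-1}+1$ or $F_{k-1}$ according as $x=a$ or $x=b$, and similarly $|S|_a$ equal to $G_{k-1}+1$ or $G_{k-1}$. Since $xwy\in\St$ is balanced, $\big||P|_a-|S|_a\big|\le 1$. Specializing to $awb\in\St$ yields $F_{k-1}\le G_{k-1}$, and to $bwa\in\St$ yields $F_{k-1}\ge G_{k-1}$, so $F_{k-1}=G_{k-1}$ for all $k$, and hence $w=\tilde{w}$.

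For the direction $(\Leftarrow)$, let $w$ be a palindromic bispecial Sturmian word, so $aw, bw, wa, wb\in\St$. Every length-$k$ factor of $xwy$ with $k\le|w|+1$ lies in $\Fact(xw)\cup\Fact(wy)$. When $x=y$, palindromicity of $w$ gives $\widetilde{wx}=xw$, so the multiset of $a$-counts of length-$k$ factors of $wx$ agrees with that of $xw$; balance of $awa$ and $bwb$ then follows directly from balance of $aw$ and $bw$. For the mixed extensions, the length-$k$ factors of $awb$ are the length-$k$ factors of $w$ together with $a\cdot w[1..k-1]$ (of $a$-count $F_{k-1}+1$) and $w[|w|-k+2..|w|]\cdot b$ (of $a$-count $G_{k-1}=F_{k-1}$, using palindromicity), and symmetrically for $bwa$. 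Hence balance of $awb$ and $bwa$ reduces to the sub-claim: every length-$k$ factor of $w$ has $a$-count in $\{F_{k-1},F_{k-1}+1\}$.

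The main obstacle is this sub-claim, which I will derive from $aw,bw\in\St$. Since $\Fact(aw)\cap\Sigma^k=\{a\cdot w[1..k-1]\}\cup(\Fact(w)\cap\Sigma^k)$ with the extra factor of $a$-count $F_{k-1}+1$, balance of $aw$ leaves only two possibilities for the $a$-counts of length-$k$ factors of $w$: the window $\{F_{k-1},F_{k-1}+1\}$ or the extremal window $\{F_{k-1}+1,F_{k-1}+2\}$. The same argument applied to $bw$ (whose extra factor has $a$-count $F_{k-1}$) leaves $\{F_{k-1},F_{k-1}+1\}$ or $\{F_{k-1}-1,F_{k-1}\}$. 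The two extremal windows are disjoint, so only the common window $\{F_{k-1},F_{k-1}+1\}$ is admissible, proving the sub-claim.
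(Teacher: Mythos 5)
Your proof is correct. Note that the paper itself offers no proof of this proposition --- it is quoted as a known characterization from de Luca and Mignosi \cite{DelMi94} --- so there is no internal argument to compare against; in the literature the result is usually obtained via the structure theory of central and standard words, whereas your argument works directly from the definition of balance, which makes it more elementary and self-contained. Both directions check out. In the forward implication, balance of $awb$ correctly yields $F_{k-1}\le G_{k-1}$ and balance of $bwa$ the reverse inequality, and recovering $w=\tilde{w}$ from $F_\ell=G_\ell$ by taking successive differences is valid precisely because the alphabet is binary. In the converse, the decomposition of $\Fact(xwy)\cap\Sigma^k$ into $\Fact(xw)\cup\Fact(wy)$ for $k\le|w|+1$ is sound (the unique factor of length $|w|+2$ needs no check), the reversal argument for $x=y$ correctly transports the $a$-counts from $wx$ to $xw$, and the sub-claim is properly pinned down by intersecting the two admissible windows coming from $aw$ and $bw$: the extremal windows $\{F_{k-1}+1,F_{k-1}+2\}$ and $\{F_{k-1}-1,F_{k-1}\}$ are disjoint, so a nonempty set of counts must sit in $\{F_{k-1},F_{k-1}+1\}$ (and the boundary case $k=|w|+1$, where $\Fact(w)\cap\Sigma^k=\emptyset$, is vacuous and harmless). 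The price of this route is some index bookkeeping; the payoff is that it uses nothing beyond the balance condition itself.
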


Another useful result is the following (\cite[Lemma 7]{DelMi94}).

\begin{lemma}\label{lem:delmi}
 If $awb$ and $bwa$ are both in $\St$, then $awa$ and $bwb$ also are, i.e., $w$ is strictly bispecial.
\end{lemma}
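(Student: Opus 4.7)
The plan is to prove $awa \in \St$ by contradiction; $bwb \in \St$ then follows by the symmetric argument obtained by swapping the roles of $a$ and $b$, since the hypothesis is symmetric under this swap. Assume $awa \notin \St$, so there exist factors $u, v$ of $awa$ with $|u| = |v| = \ell$ and $|u|_a - |v|_a \ge 2$. Since $u = v = awa$ would force $|u|_a = |v|_a$, we must have $\ell < |awa|$.

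The first step is a classification of proper factors of $awa$. Any factor of $awa$ of length less than $|awa|$ either avoids the last letter, in which case it is a factor of $aw$ and hence of $awb$, or avoids the first letter, in which case it is a factor of $wa$ and hence of $bwa$ (or both, if it lies entirely in $w$). If both $u$ and $v$ are factors of $awb$, this immediately contradicts balance of $awb$; the same argument applies if both lie in $bwa$. So we are left with the crossed case in which one of $u, v$ is a prefix of $awa$ (meeting the first letter) and the other is a suffix of $awa$ (meeting the last letter).

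In the crossed case, say $v$ is the prefix and $u$ the suffix of $awa$ of length $\ell$; the other sub-case is handled symmetrically. Then $u$ coincides with the suffix of $bwa$ of length $\ell$, and $v$ coincides with the prefix of $awb$ of length $\ell$. Let $\hat{v}$ denote the prefix of $bwa$ of length $\ell$: it differs from $v$ only in the first letter (an $a$ replaced by $b$), hence $|\hat{v}|_a = |v|_a - 1$. Since both $u$ and $\hat{v}$ are factors of $bwa$, the inequality $|u|_a - |\hat{v}|_a = (|u|_a - |v|_a) + 1 \ge 3$ contradicts balance of $bwa$. The mirror sub-case ($u$ prefix, $v$ suffix) is closed by the same device inside $awb$, comparing $u$ with the suffix of $awb$ of length $\ell$ obtained by flipping the final $a$ of $v$ to $b$.

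The main obstacle is precisely this crossed case: $u$ and $v$ do not a priori share a single balanced super-word, so a direct appeal to balance of $awb$ or $bwa$ is impossible. The resolution is to alter exactly one boundary letter, moving the \emph{isolated} factor into the balanced word containing the other; this changes the $a$-count by one, in the right direction to increase the unbalance to at least $3$, which yields the contradiction.
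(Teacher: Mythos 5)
Your argument is correct. Note first that the paper itself does not prove Lemma~\ref{lem:delmi}: it imports it as Lemma~7 of de Luca and Mignosi \cite{DelMi94}, so there is no in-paper proof to match; the classical route in \cite{DelMi94} goes through structural characterizations of unbalanced binary words, whereas yours is a self-contained counting argument directly from the definition of balance. Your case analysis is exhaustive: a factor of $awa$ of length $\ell<|awa|$ witnessed by an occurrence avoiding the last (resp.\ first) position is a factor of $awb$ (resp.\ $bwa$), so an unbalanced pair $u,v$ with $|u|_a-|v|_a\ge 2$ must, in the only surviving case, consist of the length-$\ell$ prefix and the length-$\ell$ suffix of $awa$ (each occurring nowhere else). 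The boundary-flip device is sound in both sub-cases: since $\ell\le |w|+1$, the suffix of $awa$ equals the suffix of $bwa$ and the prefix of $awa$ equals the prefix of $awb$, and replacing the single boundary letter $a$ by $b$ in the ``isolated'' factor decreases its $a$-count by exactly one, pushing the discrepancy to at least $3$ inside a single word assumed balanced --- a contradiction. The reduction of $bwb\in\St$ to $awa\in\St$ by the letter-exchange morphism is legitimate because $\St$ is closed under letter permutation and the hypothesis is invariant under the swap. The only (cosmetic) point worth making explicit is that $\ell\ge 1$ in the crossed case, so the flipped letter actually exists; this is immediate since $\ell=0$ cannot give $|u|_a-|v|_a\ge 2$. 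What your approach buys is a proof from first principles requiring no auxiliary characterization of non-balanced words, at the cost of a slightly longer case analysis.
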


We can now derive the following classification of Sturmian words with respect to their extendibility.

\begin{proposition}\label{prop:bisp}
 Let $w$ be a Sturmian word. Then:
 
\begin{enumerate}
\item  $|\Sigma w\Sigma\cap \St|=4$ if and only if $w$ is strictly bispecial;
\item  $|\Sigma w\Sigma\cap \St|=3$ if and only if $w$ is non-strictly bispecial;
\item  $|\Sigma w\Sigma\cap \St|=2$ if and only if $w$ is left special or right special but not bispecial;
\item  $|\Sigma w\Sigma\cap \St|=1$ if and only if $w$ is neither left special nor right special.
\end{enumerate}
\end{proposition}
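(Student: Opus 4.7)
The plan is to parametrize a Sturmian word $w$ by the two sets $L=\{x\in\Sigma : xw\in\St\}$ and $R=\{y\in\Sigma : wy\in\St\}$. Since $\St$ is factorial, $\Sigma w\Sigma\cap\St\subseteq\{xwy : x\in L,\ y\in R\}$, so $|\Sigma w\Sigma\cap\St|\le |L|\cdot|R|\le 4$. Since $\St$ is extendible, both $L$ and $R$ are nonempty, and any pair $(x,y)$ witnessing the two-sided extendibility of $w$ satisfies $x\in L$ and $y\in R$. The four items then correspond to the four possible values of $|\Sigma w\Sigma\cap\St|$; moreover, since the conditions on $w$ partition $\St$, it suffices to prove the four forward implications, as their converses follow automatically.

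First I would dispatch the easy items. Item (1) is just the definition of $\SBS$. For item (4), if $w$ is neither left nor right special then $|L|=|R|=1$, so $|\Sigma w\Sigma\cap\St|\le 1$, and extendibility gives the matching lower bound. For item (3), say $w\in\LS\setminus\RS$ (the other case is symmetric); then $|L|=2$ and $R=\{y\}$. Both $aw$ and $bw$ are Sturmian, hence each admits some right extension in $\St$; the extending letter necessarily lies in $R$, so it equals $y$. Therefore $\Sigma w\Sigma\cap\St=\{awy,\,bwy\}$, of size exactly $2$.

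The core of the proof is item (2). The upper bound $|\Sigma w\Sigma\cap\St|\le 3$ is immediate from the definition of non-strictly bispecial, so I only need the matching lower bound. Here I would invoke Lemma \ref{lem:prefsuf}: because $w\in\LS$ but $w\notin\SBS$, the word $w$ is a proper prefix of some strictly bispecial $c$; writing the letter of $c$ immediately after $w$ as $y_0$, the word $wy_0$ is a prefix of $c$, hence $awy_0$ and $bwy_0$ are prefixes of the Sturmian words $ac$ and $bc$, so both belong to $\St$. Symmetrically, as $w\in\RS\setminus\SBS$, $w$ is a proper suffix of some $c'\in\SBS$; denoting the letter preceding $w$ in $c'$ by $x_0$, we get $x_0 wa,\,x_0 wb\in\St$. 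A trivial case analysis on $(x_0,y_0)\in\{a,b\}^{2}$ shows that the four words $\{awy_0,\,bwy_0,\,x_0 wa,\,x_0 wb\}$ collapse to exactly three distinct elements, the unique collision occurring at $x_0 wy_0$. This yields $|\Sigma w\Sigma\cap\St|\ge 3$.

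The step I expect to require the most care is item (2), specifically excluding the a priori plausible ``diagonal'' configuration $\Sigma w\Sigma\cap\St=\{awa,bwb\}$ for bispecial $w$; Lemma \ref{lem:delmi} alone handles the symmetric antidiagonal case $\{awb,bwa\}$ but not this one. The appealing feature of the argument above is that Lemma \ref{lem:prefsuf}, by producing two Sturmian bilateral extensions sharing a common left letter and two sharing a common right letter, is incompatible with any two-element configuration of $\Sigma w\Sigma\cap\St$, and therefore size $3$ is forced once the upper bound $\le 3$ is known.
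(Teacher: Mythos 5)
Your proof is correct, but it reaches the two substantive items, (2) and (3), by a genuinely different route than the paper. The paper argues by eliminating configurations: Lemma \ref{lem:delmi} shows that the antidiagonal pair $\{awb,bwa\}$ cannot occur in $\Sigma w\Sigma\cap\St$ without $awa$ and $bwb$ also occurring, and a further imported fact (Lemma 8 of de Luca and Mignosi, via the observation that some one-letter left extension of a right special word is again right special) shows that the diagonal configuration $\{awa,bwb\}$ forces a third element; items (2) and (3) then follow by inspecting which subsets remain possible for each cardinality. You instead obtain the crucial lower bound $|\Sigma w\Sigma\cap\St|\ge 3$ for $w\in\WBS$ constructively from Lemma \ref{lem:prefsuf}, embedding $w$ as a proper prefix and a proper suffix of strictly bispecial words and reading off the three distinct Sturmian bilateral extensions $\{awy_0,\,bwy_0\}\cup\{x_0wa,\,x_0wb\}$, while items (1), (3) and (4) are handled by the $|L|\cdot|R|$ counting frame together with extendibility of $\St$; the closing observation that the four conditions partition $\St$ and force four distinct values correctly reduces everything to the forward implications (the paper uses the same exhaustion logic, just less explicitly). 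Your version has the merit of dispensing entirely with Lemma \ref{lem:delmi} and with the external Lemma 8, at the price of leaning on Lemma \ref{lem:prefsuf}; conversely, the paper's case analysis pins down exactly which three extensions survive for a non-strictly bispecial word ($awa$, $bwb$, and exactly one of $awb$, $bwa$), which your cardinality argument neither needs nor delivers. Both proofs are sound.
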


\begin{proof}
 1. and 4. follow from the definitions. 
 
 For 3., if $awb$ and $bwa$ are both in $\St$, then by Lemma \ref{lem:delmi}, $awa$ and $bwb$ also are, and then in this case $|\Sigma w\Sigma\cap \St|=4$.  On the other hand, it is also known that if $awa$ and $bwb$ are in $\St$, then at least one between $awb$ and $bwa$ is in $\St$, as a consequence of the fact that if $w$ is right special, then there exists a letter $x$ such that $xw$ is right special (see \cite[Lemma 8]{DelMi94}), and then in this case $|\Sigma w\Sigma\cap \St|\ge 3$. So, the only possible cases for $|\Sigma w\Sigma\cap \St|=2$ are when $\Sigma w\Sigma\cap \St=\{xwx,xwy\}$ or $\Sigma w\Sigma\cap \St=\{xwx,ywx\}$, for different letters $x$ and $y$, i.e., when  $w$ is left special or right special but not bispecial. 
 
  For 2., always by Lemma \ref{lem:delmi}, we have that the only possible case for $|\Sigma w\Sigma\cap \St|=3$ is when $awa$, $bwb$ and only one between $awb$ and $bwa$ are in $\St$, i.e., when $w$ is bispecial but not strictly.
\end{proof}

Note that, in general, if $L$ is a factorial and extendible language over $\Sigma$ and $w\in L$, one can have that $w$ is bispecial in $L$ and $|\Sigma w\Sigma\cap L|=2$ (for example if $\Sigma w\Sigma=\{awa,bwb\}$ or if $\Sigma w\Sigma=\{awb,bwa\}$). The previous proposition shows  that this cannot happen for the language $\St$.  

We now recall the definition of central word \cite{DelMi94}.

\begin{definition}
A word over $\Sigma$ is central if it has periods $p$ and $q$, with $\gcd(p,q)=1$, and length equal to $p+q-2$.
\end{definition}

\begin{example}
The word $w=aba$ is central, since it has periods $2$ and $3$ and length $3$. The empty word is central (in this case $p=q=1$); in fact, any word of the form $w=a^{n}$, $a\in \Sigma$, $n\ge 0$, is central, since it has periods $p=1$ and $q=n+1$ and length $p+q-2$.
\end{example}

A useful combinatorial characterization of central words is the following (see~\cite{Del97}):

\begin{proposition}\label{prop:charcentral}
A word $w$ over $\Sigma$ is central if and only if $w$ is the power of a single letter or there exist palindromes $P,Q$ such that $w=PxyQ=QyxP$, for different letters $x,y\in \Sigma$. Moreover, if $|P|<|Q|$, then $Q$ is the longest palindromic suffix of $w$.
\end{proposition}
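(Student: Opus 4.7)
My plan is to handle both directions using the Fine--Wilf theorem and palindromic structure, after first disposing of the letter-power case. If $w = a^n$, then $w$ has coprime periods $1$ and $n+1$ with $|w| = 1 + (n+1) - 2$, so $w$ is central; conversely, a central word with smaller period $p = 1$ is necessarily a letter power. Hence from now on assume $w$ is central but not a letter power, so its coprime periods satisfy $2 \le p < q$ (equality $p = q$ would force $p = q = 1$ by coprimality).

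For the forward direction, I first invoke that a central word is a palindrome (provable via Fine--Wilf applied to $w$ and $\tilde w$, which share the same length and the same pair of periods). Next, I set $P$ and $Q$ to be the prefixes of length $p - 2$ and $q - 2$ respectively; by the period conditions each coincides with the corresponding suffix, and palindromicity of $w$ makes both $P$ and $Q$ palindromes. The length identity $|w| = |P| + 2 + |Q|$ then yields $w = P x y Q$ for some letters $x, y$, and reversal gives $w = \tilde w = Q y x P$. The main obstacle is to show $x \ne y$: assuming $x = y$ makes $w = PxxQ = QxxP$, and aligning the pair of identical letters at the two junctions via the periods $p$ and $q$ produces a strictly smaller common period of $w$, contradicting $\gcd(p,q) = 1$.

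For the backward direction, given $w = PxyQ = QyxP$ with $P, Q$ palindromes and $x \ne y$, set $p = |P|+2$, $q = |Q|+2$. Then $P$ occurring as both prefix and suffix gives period $q$, likewise $Q$ gives period $p$, and $|w| = p + q - 2$ is immediate. For $\gcd(p,q) = 1$, suppose $d := \gcd(p,q) \ge 2$; Fine--Wilf applies since $|w| = p + q - 2 \ge p + q - d$, so $w$ has period $d$. As $d \mid |P|+2$ and $d \mid |Q|+2$, one gets $|P|+1 \equiv |Q|+1 \pmod d$, whence $d$-periodicity forces $w[|P|+1] = w[|Q|+1]$; but these equal $x$ and $y$ respectively from the two decompositions, yielding $x = y$, a contradiction. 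For the maximality clause, assume $|P| < |Q|$ and suppose $R$ is a palindromic proper suffix of $w$ with $|R| > |Q|$. Since $w$ is a palindrome and $R$ is itself a palindrome, $R$ is also a prefix, so $p' := |w| - |R| < |w| - |Q| = p$ is a period of $w$. An iterated Fine--Wilf argument using the periods $p', q$ (coprime up to a divisor of $\gcd(p,q)=1$) and the length relation $|w|=p+q-2$ then forces $w$ to have period $1$, i.e., to be a letter power, the excluded case.
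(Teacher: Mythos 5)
The paper itself offers no proof of this proposition: it is quoted as a known result from de Luca \cite{Del97}, so your attempt has to stand on its own. Its second half does. The backward direction is correct and well executed: $P$ and $Q$ occurring as borders give the periods $q=|Q|+2$ and $p=|P|+2$, the length identity is immediate, and your coprimality argument --- Fine--Wilf applies at length $p+q-2\ge p+q-d$, and $d$-periodicity would force $x=w[|P|+1]=w[|Q|+1]=y$ --- is exactly right. The ``moreover'' clause (read, as you correctly do, with ``longest \emph{proper} palindromic suffix'') also goes through: a longer proper palindromic suffix of the palindrome $w=PxyQ=QyxP$ is a border, hence yields a period $p'<p$, and two applications of Fine--Wilf give first the period $g=\gcd(p',q)$, which divides $q$, and then the period $\gcd(g,p)$, which divides $\gcd(p,q)=1$, forcing $w$ to be a letter power; your parenthetical ``coprime up to a divisor of $\gcd(p,q)=1$'' garbles this, but the argument is completable.

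The genuine gap is at the very first step of the forward direction: ``a central word is a palindrome (provable via Fine--Wilf applied to $w$ and $\tilde{w}$).'' That justification cannot work. Fine--Wilf requires length at least $p+q-\gcd(p,q)=p+q-1$, while a central word has length exactly $p+q-2$ --- central words are precisely the \emph{extremal counterexamples} to Fine--Wilf, so no direct application of the theorem to a word of this length is available; and in any case two distinct words of the same length may share the same set of periods, so nothing about $\tilde{w}$ follows from its having the periods of $w$. This is not a cosmetic omission: since a factorization $w=PxyQ=QyxP$ with $P,Q$ palindromes instantly implies $\tilde{w}=w$, palindromicity carries essentially the entire content of the forward direction, and everything you do afterwards (that $P$ and $Q$ are palindromes, that the second factorization $QyxP$ holds with the \emph{same} pair $x,y$ reversed) hangs on it. A genuine proof needs the finer analysis of the extremal case --- e.g., that the graph on positions with edges $i\sim i+p$ and $i\sim i+q$ has exactly two connected components and that the reversal $i\mapsto p+q-1-i$ fixes each of them --- or an induction on the palindromic-closure construction of central words. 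Relatedly, your sketch for $x\ne y$ points at the wrong contradiction (``a strictly smaller common period of $w$, contradicting $\gcd(p,q)=1$''): the working mechanism is that if $w[p-1]=w[p]=x$, then the \emph{extended} word $wx$, of length $p+q-1$, still has both periods $p$ and $q$, so Fine--Wilf now does apply and forces $wx$, hence $w$, to be a power of a single letter, which is the excluded case.
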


Actually, in the statement of Proposition \ref{prop:charcentral}, the requirement that the words $P$ and $Q$  are palindromes is not even necessary \cite{CarDel05}.

We have the following remarkable result \cite{DelMi94}:

\begin{proposition}\label{prop:sbscen}
A word over $\Sigma$ is a strictly bispecial Sturmian word if and only if it is a central word.
\end{proposition}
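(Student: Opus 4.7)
The plan is to prove the two implications of Proposition~\ref{prop:sbscen} separately, exploiting Proposition~\ref{prop:sturmstrispe} (strictly bispecial Sturmian words are palindromes) and Proposition~\ref{prop:charcentral} (the structural description of central words).

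For the direction ``central $\Rightarrow$ strictly bispecial Sturmian,'' I would argue by induction on $|w|$ following the dichotomy of Proposition~\ref{prop:charcentral}. If $w = x^n$ is a power of a single letter $x\in\Sigma$, then any factor of any extension $ywy' \in \Sigma w \Sigma$ contains at most two occurrences of the minority letter, and a direct verification shows $\Sigma w \Sigma \subset \St$. In the recursive case $w = PxyQ = QyxP$ with $P, Q$ palindromes, $\{x, y\}=\{a,b\}$ and (without loss of generality) $|Q|>|P|$, note that the factorization exhibits $|P|+2$ and $|Q|+2$ as periods of $w$ that sum to $|w|+2$; from these one can show that $Q$ is itself central of strictly smaller length. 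By induction $\Sigma Q\Sigma\subset\St$, and I would lift balance from the four extensions of $Q$ to those of $w$ via the overlapping factorization $w = PxyQ = QyxP$: factors of $\Sigma w \Sigma$ that fit inside $\Sigma Q\Sigma$ are balanced by induction, while factors straddling the overlap are controlled by palindromicity and a Parikh-vector count.

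For the direction ``strictly bispecial Sturmian $\Rightarrow$ central,'' I would first invoke Proposition~\ref{prop:sturmstrispe} to conclude that $w$ is a palindrome. The crux is to produce two coprime periods $p, q$ of $w$ with $p + q - 2 = |w|$. My approach is to let $p$ be the smallest period of $w$ and argue, via the balance of $awa$ (respectively $bwb$), that $p - 1$ equals the number of occurrences of the minority letter in $w$. Palindromicity then delivers a complementary period $q$ satisfying $q - 1$ equal to the count of the majority letter, so that $p + q - 2 = |w|_a + |w|_b = |w|$. Coprimality of $(p, q)$ follows from Fine and Wilf's theorem: a common divisor $d>1$ of $p$ and $q$ would force $w$ to be a $d$-th power, contradicting $p + q = |w| + 2$ by a divisibility count of the Parikh coordinates.

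The main obstacle will be the second direction, specifically pinning down the exact equality between the smallest period of $w$ and the count of the minority letter using balance alone. Balance of $awa$ only tells us that factor pairs of equal length in $awa$ differ in letter counts by at most one, and converting this global inequality into a precise period of $w$ requires a careful analysis of where the ``tightest'' factors occur. An alternative route that sidesteps this obstacle is to prove both directions simultaneously by induction on $|w|$, using Lemma~\ref{lem:prefsuf} and Lemma~\ref{lem:delmi} to reduce any nontrivial strictly bispecial $w$ to a shorter one via the operation of stripping its longest palindromic suffix.
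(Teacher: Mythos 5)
First, a point of reference: the paper does not prove Proposition~\ref{prop:sbscen} at all --- it is imported from de Luca and Mignosi \cite{DelMi94} --- so there is no internal proof to compare yours against; your proposal must stand on its own. It does not, because the direction ``strictly bispecial $\Rightarrow$ central'' rests on a false claim. You assert that the minimal period $p$ of $w$ satisfies $p-1=|w|_{y}$ for the minority letter $y$, and that the majority count supplies a complementary period. Take $w=aabaa$: it is central (periods $3$ and $4$ are coprime and $3+4-2=5=|w|$) and strictly bispecial (indeed $awb=aaabaab=w_{5,2}$), its minimal period is $p=3$, yet $|w|_{b}=1$, so $p-1=2\neq 1$; and the ``complementary'' value $q=|w|_{a}+1=5$ gives $p+q-2=6\neq|w|$. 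In general the coprime period pair $(p,q)$ with $p+q=|w|+2$ is \emph{not} $(|w|_{b}+1,|w|_{a}+1)$; the letter counts give the slope coordinates $(p',q')$ of the Christoffel word $awb=w_{p',q'}$, and the passage between $(p',q')$ and the periods of $w$ goes through the continued-fraction (Stern--Brocot) structure, not a direct equality. So your mechanism for producing the two periods collapses, and with it the Fine--Wilf step (which in any case would only show that $\gcd(p,q)$ is a period, not that $w$ is a $d$-th power).

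The converse direction is also not yet a proof. The base case $w=x^{n}$ is fine, and the reduction of $Q$ to a shorter central word is correct, but the entire difficulty of establishing $\Sigma w\Sigma\subseteq\St$ sits exactly in the factors that straddle the junction in $w=PxyQ=QyxP$: balance is a global condition comparing \emph{all} pairs of equal-length factors, and the phrase ``controlled by palindromicity and a Parikh-vector count'' is an assertion where the argument should be. The classical route (\cite{DelMi94}, \cite{BeDel97}) avoids this lifting altogether: one identifies central words with the palindromic prefixes of standard (equivalently, conjugates of Christoffel) words, verifies balance once for those via their arithmetic definition, and obtains strict bispecialness from Lemma~\ref{lem:prefsuf}-type extendibility; the reverse inclusion is then proved by a structural induction on the longest palindromic proper border, not by counting periods against letter frequencies. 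Your closing suggestion of a simultaneous induction stripping the longest palindromic suffix is much closer to a workable plan than the period argument, but as written it is only a plan.
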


We now introduce Christoffel words, that are words coding the digital approximations of segments in the Euclidean plane.

\begin{definition}
Let $n>1$ and $p,q>0$ be integers such that $p+q=n$. The lower Christoffel word $w_{p,q}$ is the word defined for $1\le i\le n$ by
\[w_{p,q}[i] = \left\{ \begin{array}{lllll}
a & \mbox{if $iq$ mod$(n)>(i-1)q$ mod$(n)$,}\\
b & \mbox{if $iq$ mod$(n)<(i-1)q$ mod$(n)$.}\\
\end{array} \right.\]
\end{definition}

\begin{example}
 Let $p=6$ and $q=4$. We have $\{i4$ mod$(10)\mid i=0,1,\ldots,10\}=\{0,4,8,2,6,0,4,8,2,6,0\}$. Hence, $w_{6,4}=aababaabab$.
\end{example}

Notice that for every $n>1$, there are exactly $n-1$ lower Christoffel words $w_{p,q}$, corresponding to the $n-1$ pairs $(p,q)$ such that $p,q>0$ and $p+q=n$. 

\begin{remark}
In the literature, Christoffel words are often defined with the additional requirement that $\gcd(p,q)=1$ (cf.~\cite{Book08}). We call such Christoffel words primitive, since a Christoffel word is a primitive word if and only if $\gcd(p,q)=1$.
\end{remark}

If one draws a word in the grid $\mathbb{Z} \times \mathbb{Z}$ by encoding each $a$ with a horizontal unitary segment and each $b$ with a vertical unitary segment, the lower Christoffel word $w_{p,q}$ is in fact the best grid approximation from below of the Euclidean segment joining $(0,0)$ to $(p,q)$, and has slope $q/p$, that is, $|w|_{a}=p$ and  $|w|_{b}=q$ (see~\figurename~\ref{fig:GC}).

\begin{figure}
\begin{center}
\begin{minipage}{7.2cm}
\includegraphics[height=50mm]{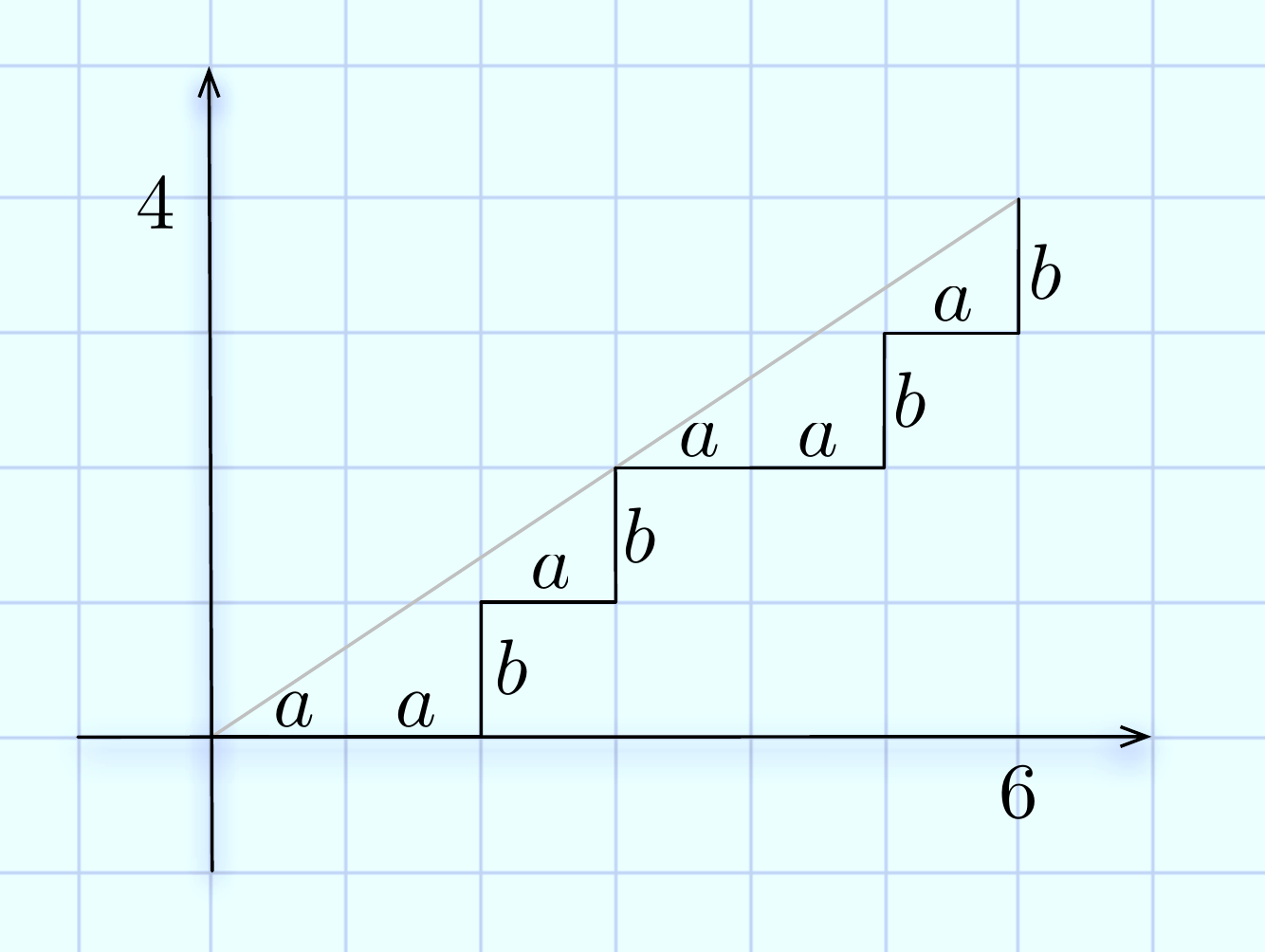}
\end{minipage}
\begin{minipage}{7.2cm}
\includegraphics[height=50mm]{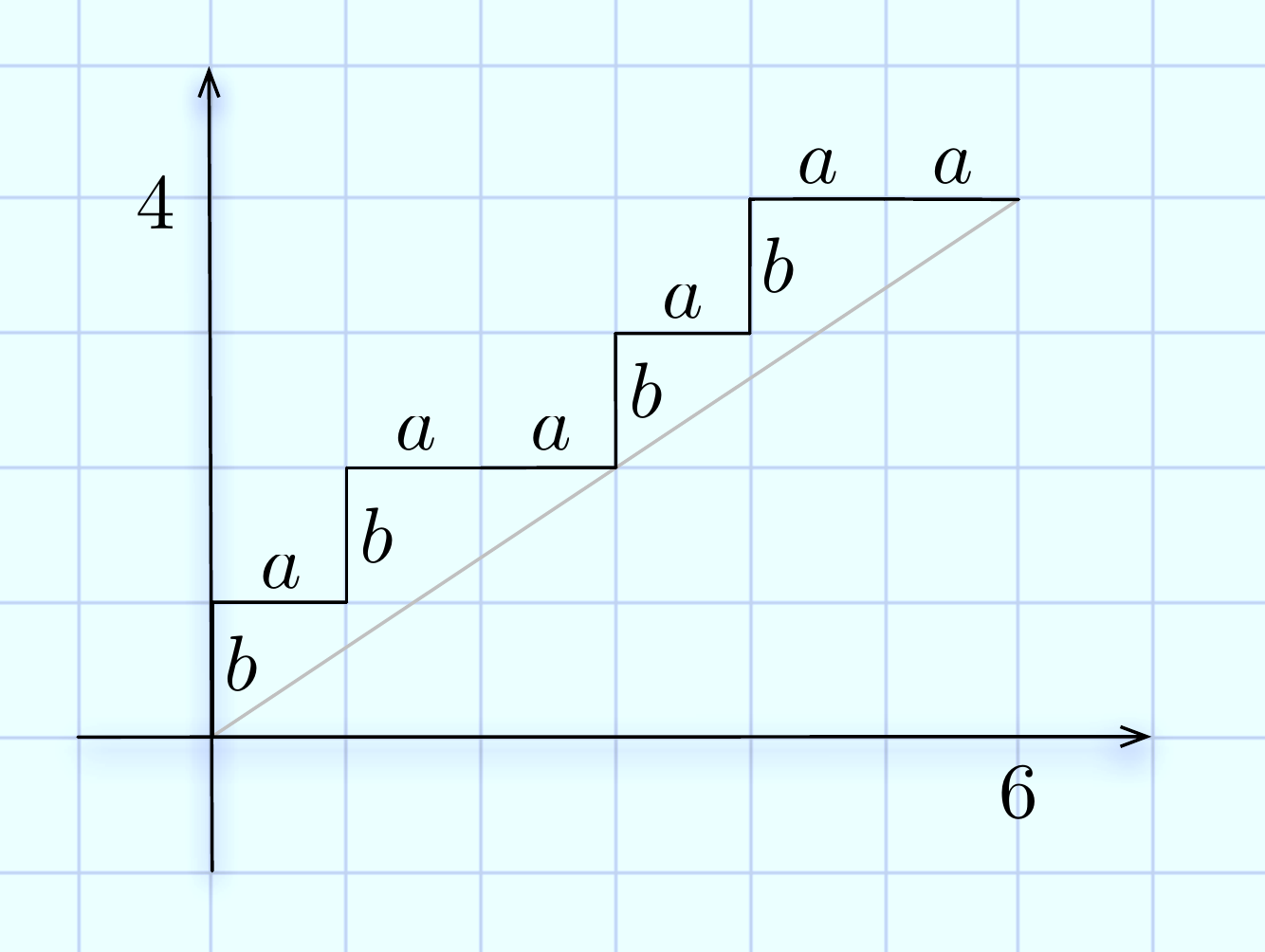}
\end{minipage}
\end{center}
\caption{The lower Christoffel word $w_{6,4}=aababaabab$ (left) and the upper Christoffel word $w'_{6,4}=babaababaa$ (right).\label{fig:GC}}
\end{figure}

Analogously, one can define the upper Christoffel word $w'_{p,q}$ by
\[w'_{p,q}[i] = \left\{ \begin{array}{lllll}
a & \mbox{if $ip$ mod$(n)<(i-1)p$ mod$(n)$,}\\
b & \mbox{if $ip$ mod$(n)>(i-1)p$ mod$(n)$.}\\
\end{array} \right.\]
Of course, the upper Christoffel word $w'_{p,q}$ is the best grid approximation from above of the Euclidean segment joining $(0,0)$ to $(p,q)$ (see \figurename~\ref{fig:GC}).

\begin{example}
 Let $p=6$ and $q=4$. We have $\{i6$ mod$(10)\mid i=0,1,\ldots,10\}=\{0,6,2,8,4,0,6,2,8,4,0\}$. Hence, $w'_{6,4}=babaababaa$.
\end{example}

The next result follows from elementary geometrical considerations.

\begin{lemma}\label{lem:rev}
 For every pair of positive integers $(p,q)$ the upper Christoffel word $w'_{p,q}$ is the reversal of the lower Christoffel word $w_{p,q}$.
\end{lemma}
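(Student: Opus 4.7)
My plan is to prove the identity position by position: for every $i$ with $1 \le i \le n$, I will show that $w'_{p,q}[i] = w_{p,q}[n+1-i]$, which is exactly the statement $w'_{p,q} = \widetilde{w_{p,q}}$. The driving observation is that $p+q=n$ forces $p \equiv -q \pmod{n}$, so the residues $ip \bmod n$ appearing in the definition of $w'_{p,q}$ are the ``mirror images'' modulo $n$ of the residues $iq \bmod n$ appearing in the definition of $w_{p,q}$. Geometrically, this corresponds to the $180^{\circ}$ rotation around the midpoint $(p/2,q/2)$: such a rotation maps the segment from $(0,0)$ to $(p,q)$ to itself (reversing orientation), swaps the ``below'' and ``above'' regions, and turns each horizontal step into a reversed horizontal step and each vertical step into a reversed vertical step. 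Reading the rotated lower path backwards therefore produces the upper path, which is the content of the lemma.

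To turn this into a formal argument, I would set $a_i := iq \bmod n$ and $b_i := ip \bmod n$. From $p \equiv -q \pmod{n}$ I get the clean relation
\[
b_i \;=\;
\begin{cases}
n - a_i & \text{if } a_i \ne 0,\\
0 & \text{if } a_i = 0.
\end{cases}
\]
Using this I would rewrite the defining condition $b_i < b_{i-1}$ for $w'_{p,q}[i]=a$ in terms of $a_i,a_{i-1}$, and compare it with the condition $a_{n+1-i} > a_{n-i}$ for $w_{p,q}[n+1-i]=a$, where $a_{n-j} = -jq \bmod n$ is likewise expressible via $a_j$ by the same boxed formula.

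The bulk of the work is then a three-case analysis driven by whether $a_{i-1}$ or $a_i$ vanishes. In the generic case $a_{i-1},a_i \ne 0$, both conditions reduce to the single inequality $a_{i-1} < a_i$, so they agree. In the two boundary cases (when exactly one of $a_{i-1},a_i$ equals $0$; they cannot both vanish since $0 < q < n$), one checks by direct substitution that both $w'_{p,q}[i]$ and $w_{p,q}[n+1-i]$ take the same explicit value ($b$ in the first boundary case, $a$ in the second). The only subtle step is this boundary bookkeeping, since it is exactly where the piecewise formula for $b_i$ kicks in and where a naive ``$b_i = n-a_i$'' identity would fail; everything else is a routine substitution.
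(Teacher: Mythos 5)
Your plan is correct, and it is genuinely more than what the paper offers: the paper states Lemma~\ref{lem:rev} with only the remark that it ``follows from elementary geometrical considerations'' (essentially the $180^{\circ}$ rotation about $(p/2,q/2)$ that you describe as motivation), and gives no written argument. Your arithmetic verification from the modular definitions is therefore a legitimate, self-contained alternative, and your three-case analysis is sound: the boundary cases are handled correctly, the two residues cannot vanish simultaneously since $0<q<n$, and the outputs ($b$ when $a_{i-1}=0$, $a$ when $a_i=0$) are right. One simplification worth noting: you do not actually need the piecewise formula or the case split at all. Since $p\equiv -q \pmod n$, you have the exact identities $a_{n-i}=\bigl((n-i)q\bigr)\bmod n=(ip)\bmod n=b_i$ and likewise $a_{n+1-i}=b_{i-1}$, so the condition $a_{n+1-i}>a_{n-i}$ defining $w_{p,q}[n+1-i]=a$ is literally the condition $b_{i-1}>b_i$ defining $w'_{p,q}[i]=a$, and the position-by-position equality drops out in one line. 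What your longer route buys is transparency about where a naive ``$b_i=n-a_i$'' identity would fail; what the shortcut buys is the elimination of all bookkeeping. Either way, the argument is complete once written out, whereas the paper leaves the reader to supply everything.
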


If (and only if) $p$ and $q$ are coprime, the Christoffel word $w_{p,q}$ (resp.~$w'_{p,q}$) intersects the Euclidean segment joining $(0,0)$ to $(p,q)$ only at the end points, and is a primitive word. 

The link between primitive Christoffel words and central words (i.e., by Proposition \ref{prop:sbscen}, strictly bispecial Sturmian words) is contained in the following remarkable result of Jean Berstel and Aldo de Luca (cf.~\cite{BeDel97}).

\begin{theorem}\label{theor:sbsCP}
  $\SBS=\{w\mid xwy \mbox{ is a primitive Christoffel word, $x,y\in \Sigma\}$}.$
\end{theorem}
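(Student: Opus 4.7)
By Proposition \ref{prop:sbscen}, $\SBS$ coincides with the set of central words, so the plan is to prove the equivalent statement: $w$ is central if and only if there exist $x, y \in \Sigma$ such that $xwy$ is a primitive Christoffel word. Central words are palindromes (this follows from Proposition \ref{prop:charcentral}, since $\widetilde{PxyQ} = QyxP = PxyQ$), so by Lemma \ref{lem:rev} the statement for the upper case reduces to the lower one; it is therefore enough to show that $w$ is central if and only if $awb$ is a primitive lower Christoffel word $w_{p,q}$ with $\gcd(p,q)=1$.

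For the direction from central to Christoffel, take a central word $w$ with coprime periods $p, q$ and $|w| = p+q-2$; I would prove $awb = w_{p,q}$ by induction on $|w|$ using the palindromic decomposition of Proposition \ref{prop:charcentral}. The base cases are $w = a^n$ and $w = b^n$: here $awb = a^{n+1}b$ and $bwa = ba^{n+1}$, which one checks directly against the modular formula, since the residues $iq \bmod (p+q)$ are strictly increasing until wrapping around once. In the inductive step $w = PxyQ = QyxP$, the palindromes $P$ and $Q$ are themselves central, and their pairs of periods relate to $(p,q)$ by one step of the Euclidean algorithm; this matches precisely the recursive construction of $w_{p,q}$ from Christoffel words of smaller slope (via the Stern--Brocot mediant), so the inductive hypothesis propagates the identity $awb = w_{p,q}$.

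For the converse, suppose $xuy$ is a primitive Christoffel word with $\gcd(p,q) = 1$ and, without loss of generality, $xuy = w_{p,q}$. Geometry forces $x = a$ and $y = b$, since the path approximating from below must begin horizontally and end vertically; this is recorded already in the introduction as $w_{p,q} = aub$. It then suffices to show that $u$ admits both $p$ and $q$ as periods, because combined with $|u| = p+q-2$ and $\gcd(p,q)=1$ this is exactly the definition of a central word. I would derive the periods directly from the defining congruence $w_{p,q}[i] = a \Leftrightarrow iq \bmod (p+q) > (i-1)q \bmod (p+q)$: shifting $i$ by $p$ changes the residue by $pq \equiv -q^{2} \pmod{p+q}$, and a short modular argument shows that the letter at position $i$ equals the letter at position $i+p$ for all $i$ in the interior range $2 \le i \le |u|-p+1$, i.e.\ exactly on $u$. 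The same computation with the roles of $p$ and $q$ swapped (or, alternatively, via Lemma \ref{lem:rev} applied to the reversal $w'_{p,q} = b\tilde{u}a$) yields $q$ as a period of $u$.

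The main obstacle is the modular bookkeeping that ties the palindromic decomposition of a central word to one step of the Euclidean algorithm on its periods, and that simultaneously extracts the two coprime periods from the residue pattern defining $w_{p,q}$. Once these period identifications are clean, the equivalence follows formally from the definition of central word together with Proposition \ref{prop:sbscen}; in particular, one could alternatively bypass part of the induction by invoking the known count $|\SBS(n)| = \phi(n+2)$, which matches the number $\phi(n+2)$ of primitive Christoffel words of length $n+2$, reducing the proof to showing that $w \mapsto awb$ is injective and maps into primitive Christoffel words.
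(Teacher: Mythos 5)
The paper does not prove this statement at all: it is quoted verbatim from Berstel and de Luca \cite{BeDel97}, so there is no internal proof to compare against. Judged on its own merits, your outline has the right architecture (reduce to central words via Proposition \ref{prop:sbscen}, use the palindromic decomposition of Proposition \ref{prop:charcentral} in one direction and extract periods in the other), but it rests on a false identification of the two parameters.

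The gap is concrete: the periods of the central word $u$ sitting inside $w_{p,q}=aub$ are \emph{not} $p$ and $q$. Take $p=5$, $q=2$: the residues $iq \bmod 7$ give $w_{5,2}=aaabaab$, so $u=aabaa$. This $u$ is central, but its coprime periods are $3$ and $4$ (indeed $3+4-2=5=|u|$), and $2$ is \emph{not} a period of $aabaa$ since $u[1]=a\neq b=u[3]$. The correct correspondence sends a central word with periods $(p',q')$ to the Christoffel word $w_{p,q}$ with $p=|u|_a+1$, $q=|u|_b+1$, and $(p,q)$ is related to $(p',q')$ by a modular inverse modulo $p+q$, not by equality. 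This breaks both halves of your argument as written: in the forward direction the claimed identity ``$awb=w_{p,q}$ for $w$ central with periods $p,q$'' is false ($aabaa$ has periods $3,4$ but $a\,aabaa\,b=w_{5,2}$, not $w_{3,4}$, which must contain four $b$'s); in the converse, the ``short modular argument'' cannot show that positions $i$ and $i+p$ of $w_{p,q}$ carry the same letter, because the letter at position $i$ is governed by whether $(i-1)q \bmod (p+q) < p$, and shifting by $p$ changes this residue by $pq\not\equiv 0 \pmod{p+q}$ (in the example, position $2$ of $aaabaab$ is $a$ while position $7$ is $b$). The Stern--Brocot/Euclidean-step induction you gesture at can be made to work, but only after the period bookkeeping is corrected, and that bookkeeping is precisely the nontrivial content of the theorem; as it stands the proposal proves a false intermediate claim. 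Your closing counting remark ($\SBS(n)=\phi(n+2)$ equals the number of primitive lower Christoffel words of length $n+2$) is a legitimate way to shortcut \emph{one} inclusion once the other, together with injectivity of $w\mapsto awb$, is established, but it cannot substitute for the direction that is broken.
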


If instead the integers $p$ and $q$ are not coprime, then there exist coprime integers $p',q'$ such that $p=rp'$, $q=rq'$, for an integer $r>1$. In this case, we have $w_{p,q}=(w_{p',q'})^{r}$, i.e., $w_{p,q}$ is a power of a primitive Christoffel word. Hence, by Theorem \ref{theor:sbsCP}, there exists a central word $u$ such that $w_{p,q}=(aub)^{r}$ and $w'_{p,q}=(bua)^{r}$. So, we have:

\begin{lemma}\label{lem:npChris}
The word $xwy$, $x\neq y\in \Sigma$, is a Christoffel word if and only if $w=(uyx)^{n}u$, for an integer $n\ge 0$ and a central word $u$. Moreover, $xwy$ is a primitive Christoffel word if and only if $n=0$.
\end{lemma}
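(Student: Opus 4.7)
The plan is to reduce to Theorem \ref{theor:sbsCP} (the primitive case) by exploiting the fact, recalled just above the lemma, that a non-primitive Christoffel word $w_{p,q}$ with $\gcd(p,q)=r>1$ equals $(w_{p',q'})^{r}$, where $p=rp'$, $q=rq'$ and $\gcd(p',q')=1$. Since the two cases of the lemma (according to whether $\{x,y\}=\{a,b\}$ with $x=a$ or $x=b$) are symmetric via Lemma \ref{lem:rev}, I would carry out the argument for lower Christoffel words and then remark that the upper case is obtained by reversal.

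For the direct implication, suppose $xwy$ is a Christoffel word. If $x=a$, $y=b$, write $xwy=w_{p,q}=(w_{p',q'})^{r}$. By Theorem \ref{theor:sbsCP} applied to the primitive Christoffel word $w_{p',q'}$, there is a central word $u$ with $w_{p',q'}=aub$, so $xwy=(aub)^{r}$. The key combinatorial identity to check is
\[
(aub)^{r}\;=\;a\,(uba)^{r-1}\,u\,b,
\]
which follows by induction on $r$ (or simply by associativity: the $r$ occurrences of $a$ split the word into one leading $a$, followed by $r-1$ blocks of the form $uba$, and a final $ub$). Setting $n=r-1\ge 0$ and reading off the middle factor gives $w=(uba)^{n}u=(uyx)^{n}u$, as required. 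The case $x=b$, $y=a$ is identical using $w'_{p,q}=(bua)^{r}$.

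For the converse, suppose $w=(uyx)^{n}u$ for a central word $u$ and $n\ge 0$. In the case $x=a$, $y=b$, the same identity in reverse gives $xwy=a(uba)^{n}ub=(aub)^{n+1}$. By Theorem \ref{theor:sbsCP}, $aub$ is a primitive lower Christoffel word $w_{p',q'}$, and raising to the power $n+1$ yields the (possibly non-primitive) Christoffel word $w_{(n+1)p',(n+1)q'}$; similarly for $x=b$, $y=a$. The final clause on primitivity is then immediate: $xwy$ is a power of the primitive word $aub$ (or $bua$), so it is itself primitive exactly when the exponent $n+1$ equals $1$, i.e., when $n=0$, in which case $w=u$ is central and Theorem \ref{theor:sbsCP} gives primitivity directly.

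I do not anticipate any serious obstacle here; the only non-routine piece is the elementary factorization $(aub)^{r}=a(uba)^{r-1}ub$, which is really just a bracket-shift. Everything else is bookkeeping plus a direct appeal to Theorem \ref{theor:sbsCP} and Lemma \ref{lem:rev}.
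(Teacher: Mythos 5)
Your proposal is correct and follows essentially the same route as the paper, which derives the lemma from the identity $w_{p,q}=(w_{p',q'})^{r}=(aub)^{r}$ (with $u$ central, by Theorem \ref{theor:sbsCP}) and the bracket-shift $(aub)^{r}=a(uba)^{r-1}ub$; you have simply made the bookkeeping and the reversal argument for upper Christoffel words explicit.
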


Another way to construct Christoffel words is the following. Let $p,q$ be positive integers and  $p+q=n$. Write a block of $p$ $a$'s followed by a block of $q$ $b$'s. Now shift the block of $b$'s by $q$ positions on the left, modulo $n$, and fit the remaining positions with the $p$ $a$'s. Repeating this procedure $n$ times, one obtains $n$ words of length $n$, that can be arranged to form the rows of a square matrix $A_{p,q}$ (see \figurename~\ref{fig:matrix} for an example). The first column of the matrix $A_{p,q}$ is the lower Christoffel word $w_{p,q}$, while the last column is the upper Christoffel word $w'_{p,q}$. Actually, the columns of $A_{p,q}$ are precisely the $n$ rotations of $w_{p,q}$ and are lexicographically ordered from left to right. Note that the $n$ columns are all distinct if and only if the Christoffel word $w_{p,q}$ is primitive, i.e., if and only if $p$ and $q$ are coprime. This construction is linked to the fact that Christoffel words have a completely clustered Burros-Wheeler transform (see \cite{MaReSc03} for more details). 

\renewcommand\arraystretch{.9}

\begin{figure}
\begin{center}
\newcolumntype{C}{>{\centering\arraybackslash}b{.1mm}<{}}
$$
A_{6,4}=
\left(
{\begin{tabular}{*{10}{C}}
a & a & a & a & a & a & b & b & b & b \\
a & a & b & b & b & b & a & a & a & a \\
b & b & a & a & a & a & a & a & b & b \\
a & a & a & a & b & b & b & b & a & a \\
b & b & b & b & a & a & a & a & a & a \\
a & a & a & a & a & a & b & b & b & b \\
a & a & b & b & b & b & a & a & a & a \\
b & b & a & a & a & a & a & a & b & b \\
a & a & a & a & b & b & b & b & a & a \\
b & b & b & b & a & a & a & a & a & a \\
\end{tabular}}
\hspace{1.5mm}\right) $$
\end{center}
\caption{The matrix $A_{p,q}$ for $p=6$ and $q=4$. The first column is the lower Christoffel word $w_{6,4}$, the last column is the upper Christoffel word $w'_{6,4}$. The columns of $A_{p,q}$ are precisely the rotations of the word $w_{6,4}$, and appear lexicographically ordered from left to right.\label{fig:matrix}}
\end{figure}

\bigskip

We are now extending the result in Theorem \ref{theor:sbsCP} to non-primitive Christoffel words.

Recall from \cite{Del97} that the right (resp.~left) palindromic closure of a word $w$ is the (unique) shortest palindrome $w^{(+)}$ (resp.~$w^{(-)}$) such that $w$ is a prefix of $w^{(+)}$ (resp.~a suffix of $w^{(-)}$). If $w=uv$ and $v$ is the longest palindromic suffix of $w$ (resp.~$u$ is the longest palindromic prefix of $w$), then $w^{(+)}=w\tilde{u}$ (resp.~$w^{(-)}=\tilde{v}w$). 

\begin{lemma}\label{lem:rpl}
 Let $xwy$ be a Christoffel word, $x,y\in \Sigma$. Then  $w^{(+)}$ and  $w^{(-)}$ are central words.
\end{lemma}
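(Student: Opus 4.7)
The plan is to apply Lemma~\ref{lem:npChris} to write $w=(uyx)^{n}u$ for some central word $u$ and some integer $n\geq 0$. The case $n=0$ is immediate because $w=u$ is already central (hence palindromic), so $w^{(+)}=w^{(-)}=u$. Assume from now on $n\geq 1$. Lemma~\ref{lem:rev} together with the preceding discussion shows that $\widetilde{xwy}=y\tilde{w}x$ is again a Christoffel word, so $\tilde{w}$ has the same structural form as $w$; since $w^{(-)}=\widetilde{(\tilde{w})^{(+)}}$ and central words are palindromes, it suffices to show that $w^{(+)}$ is central.

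The core idea is to realize $w^{(+)}$ as a palindromic factor of an infinite Sturmian word. By Lemma~\ref{lem:npChris}, $xuy$ is a primitive Christoffel word, so the periodic infinite word $S:=(xuy)^{\omega}$ is balanced and hence is an infinite Sturmian word. Because $xwy=(xuy)^{n+1}$ is a prefix of $S$, the word $w$ appears inside $S$ as the factor $S[2\,..\,|w|+1]$. Once one can show that $w^{(+)}$ is also a factor of $S$, centrality follows: by the classical theory of Sturmian words, every palindromic factor of an infinite Sturmian word is a central word (this is the counterpart of Proposition~\ref{prop:sbscen} lifted to the infinite setting).

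The key claim is therefore
\[
w^{(+)} \;=\; S[\,2\,..\,|w^{(+)}|+1\,].
\]
Writing $w^{(+)}=w\tilde{z}$, where $z$ is the prefix of $w$ of length $m:=|w^{(+)}|-|w|$, this reduces to checking that the $m$ characters of $S$ immediately following the occurrence of $w$ match $\tilde{z}$ letter by letter. By Proposition~\ref{prop:charcentral} the central word $u$ has two coprime periods $p,q$ with $|u|=p+q-2$, so $S$ has period $p+q=|xuy|$. Using this periodicity, the desired $m$ characters are determined by the structure of $xuy$, and by rolling them back across the period they read $y$, then $x$, then $u[1],u[2],\ldots$. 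Using that $u$ is a palindrome (so that, in particular, $u[i]=u[|u|+1-i]$) and the decomposition $u=PxyQ=QyxP$ of Proposition~\ref{prop:charcentral}, one verifies that this string is precisely $\tilde{z}$. This identification of indices (between the periodic continuation of $w$ in $S$ and the tail of its palindromic closure) is the main technical obstacle, and requires a short case analysis depending on whether $u$ is a power of a single letter or admits the nontrivial decomposition of Proposition~\ref{prop:charcentral}.

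Granting the claim, $w^{(+)}$ is a palindromic factor of the infinite Sturmian word $S$, and hence a central word by the classical result on palindromic factors of Sturmian words cited above. Applying the same argument to $\tilde{w}$ gives that $w^{(-)}$ is central as well.
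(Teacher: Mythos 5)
There is a genuine gap, and it is located exactly where you lean on ``the classical theory'': the statement that every palindromic factor of an infinite Sturmian word is a central word is false. For example, $baab$ is a palindromic factor of the Fibonacci word, but it is not central: its periods are $3,4,5,\ldots$, and no two coprime periods $p,q$ satisfy $p+q-2=4$ (indeed, by the paper's count the only central words of length $4$ are $aaaa$ and $bbbb$, since $\SBS(4)=\phi(6)=2$). The correct counterpart of Proposition~\ref{prop:sbscen} is the conjunction with Proposition~\ref{prop:sturmstrispe}: a \emph{bispecial} Sturmian word is central if and only if it is a palindrome. So even after you establish that $w^{(+)}$ is a balanced palindrome (which is all that being a palindromic factor of $(xuy)^{\omega}$ gives you --- and note that $(xuy)^{\omega}$, being periodic, is not an infinite Sturmian word in the paper's sense, though its factors are indeed Sturmian), you would still have to prove that $w^{(+)}$ is left and right special, and your proposal contains no argument for that. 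This is not a cosmetic omission: it is precisely the nontrivial content of the lemma.

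A secondary weakness is that the key identity $w^{(+)}=S[2\,..\,|w^{(+)}|+1]$ is only asserted, with the letter-by-letter verification deferred to an unspecified ``short case analysis''; that computation is essentially where the paper does its real work. The paper's proof avoids the false step entirely: it determines the longest palindromic suffix of $w=Pxy(QyxPxy)^{n}Q$ explicitly (using that the longest palindromic suffix of a central word has no internal occurrences), obtains $w^{(+)}=wyxP$, and then exhibits the two factorizations $w^{(+)}=uyxz=zxyu$ with $z$ a palindrome, so that centrality follows directly from Proposition~\ref{prop:charcentral} rather than from any property of palindromic factors of Sturmian words. If you want to salvage your approach, you would need to replace the false lemma by an argument that $w^{(+)}$ is strictly bispecial (e.g.\ by showing it is a palindromic \emph{prefix} of a suitable standard word and invoking de Luca's theorem that those are exactly the central words), which is a genuinely different and not obviously shorter route.
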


\begin{proof}
 Let $xwy$ be a Christoffel word, $x,y\in \Sigma$.  By Lemma \ref{lem:npChris}, $w=(uyx)^{n}u$, for an integer $n\ge 0$ and a central word $u$. We prove the statement for the right palindromic closure, that for the left palindromic closure will then follow by symmetry. If $n=0$, then $w=u$, so $w$ is a palindrome and then $w^{(+)}=w$ is a central word. So suppose $n>0$. We first consider the case when $u$ is the power of a single letter (including the case $u=\epsilon$). We have that either $w=(y^{k+1}x)^{n}y^{k}$ or $w=(x^{k}yx)^{n}x^{k}$ for some $k\ge 0$. In the first case, $w^{(+)}=wy=(y^{k+1}x)^{n}y^{k+1}$, whereas in the second case $w^{(+)}=wyx^{k}=(x^{k}yx)^{n}x^{k}yx^{k}$. In both cases one has that $w^{(+)}$ is a strictly bispecial Sturmian word, and thus, by Proposition \ref{prop:sbscen}, a central word.

Let now $u$ be not the power of a single letter. Hence, by Proposition \ref{prop:charcentral}, there exist palindromes $P,Q$ such that $u=PxyQ=QyxP$. Now, observe that 
\[
 w  =  (uyx)^{n}u
   =  Pxy(QyxPxy)^{n}Q.
\]
We claim that the longest palindromic suffix of $w$ is $(QyxPxy)^{n}Q$. Indeed, the longest palindromic suffix of $w$ cannot be $w$ itself since $w$ is not a palindrome, so since any  palindromic suffix of $w$ longer than $(QyxPxy)^{n}Q$ must start in $u$, in order to prove the claim it is enough to show that the first non-prefix occurrence of $u$ in $w$ is that appearing as a prefix of $(QyxPxy)^{n}Q$. Now, since the prefix $v=PxyQyxP$ of $w$ can be written as $v=uyxP=Pxyu$, one has by Proposition \ref{prop:charcentral} that $v$ is a central word. It is easy to prove (see, for example, \cite{BuDelFi12}) that the longest palindromic suffix of a central word does not have internal occurrences, i.e., appears in the central word only as a prefix and as a suffix. Therefore, since $|u|>|P|$, $u$ is the longest palindromic suffix of $v$ (by Proposition \ref{prop:charcentral}), and so appears in $v$ only as a prefix and as a suffix. This shows that $(QyxPxy)^{n}Q$ is the longest palindromic suffix of $w$.

Thus, we have $w^{(+)}=wyxP$, and we can write: 
\begin{eqnarray*}
 w^{(+)} & = & Pxy(QyxPxy)^{n}QyxP\\
 & = & PxyQ \cdot yx \cdot P(xyQyxP)^{n}\\
 & = & (PxyQyx)^{n}P \cdot xy \cdot QyxP,
\end{eqnarray*}
so that $w^{(+)}=uyxz=zxyu$ for the palindrome $z=P(xyQyxP)^{n}=(PxyQyx)^{n}P$. By Proposition \ref{prop:charcentral}, $w^{(+)}$ is a central word.
\end{proof}

We are now ready to state our main result.

\begin{theorem}\label{theor:main}
$\BS=\{w\mid xwy \mbox{ is a Christoffel word, $x,y\in \Sigma\}$}.$
\end{theorem}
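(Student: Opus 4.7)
My plan is to prove both inclusions of the stated identity.

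For the direction $\supseteq$, suppose $xwy$ is a Christoffel word. Then $xwy \in \St$, so $w \in \St$. By Lemma~\ref{lem:rpl}, both $w^{(+)}$ and $w^{(-)}$ are central words, hence strictly bispecial Sturmian by Proposition~\ref{prop:sbscen}. Since $w$ is a prefix of $w^{(+)}$ and a suffix of $w^{(-)}$, Lemma~\ref{lem:prefsuf} yields $w \in \LS \cap \RS = \BS$.

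For the direction $\subseteq$, let $w \in \BS$. If $w \in \SBS$, Theorem~\ref{theor:sbsCP} already furnishes a primitive Christoffel word of the form $xwy$. So assume $w \in \WBS$. In view of Lemma~\ref{lem:npChris}, it suffices to produce a central word $u$, an integer $n \ge 1$, and distinct letters $x, y$ with $w = (uyx)^n u$; a short expansion gives $xwy = (xuy)^{n+1}$, a non-primitive Christoffel word since $xuy$ itself is a primitive Christoffel word (Theorem~\ref{theor:sbsCP}).

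To find such a decomposition I would take $c$ to be the shortest central word having $w$ as a prefix (it exists by Lemma~\ref{lem:prefsuf}). Because $w$ is not a palindrome (Proposition~\ref{prop:sturmstrispe}), $c$ is strictly longer than $w$, and Proposition~\ref{prop:charcentral} writes $c = PXYQ = QYXP$ for palindromes $P, Q$ (the case $c$ a letter power is handled directly). Setting $u := P$ (the shorter of the two), $y := X$, $x := Y$, the claim reduces to verifying that $w$ coincides with the prefix $(Pxy)^n P$ of $c$ for a unique integer $n \ge 1$.

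The main obstacle I anticipate is this length alignment: ensuring that $|w|$ lands precisely on a position of the form $(n+1)|P| + 2n$ inside the periodic pattern of $c$, rather than in the interior of a separator $XY$ or partway through a $P$-block. This is where the full hypothesis $w \in \WBS$ is used essentially: by Proposition~\ref{prop:bisp} combined with Lemma~\ref{lem:delmi}, exactly one of $awb, bwa$ fails to be Sturmian, and this single forbidden extension is compatible only with $w$ ending cleanly just after a $P$-block, which simultaneously pins down $n$ and fixes the orientation of $xy$. A symmetric consideration of the shortest central word having $w$ as a suffix supplies the matching information from the other end and confirms the decomposition.
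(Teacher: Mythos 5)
Your inclusion $\supseteq$ is correct and is essentially the paper's own argument: Lemma~\ref{lem:npChris} plus Lemma~\ref{lem:rpl} to see that $w$ is a prefix of the central word $w^{(+)}$ and a suffix of the central word $w^{(-)}$, then Proposition~\ref{prop:sbscen} and Lemma~\ref{lem:prefsuf}. The reduction of the inclusion $\subseteq$ is also set up correctly: for $w\in\WBS$ it suffices, by Lemma~\ref{lem:npChris}, to exhibit a central word $u$, distinct letters $x,y$ and $n\ge 1$ with $w=(uyx)^{n}u$.

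The problem is that you never actually produce this decomposition: the step you yourself flag as ``the main obstacle'' --- that $|w|$ must land exactly at a position of the form $n(|P|+2)+|P|$ in the shortest central word $c$ having $w$ as a prefix --- is the entire content of the hard direction, and your justification for it is an assertion, not an argument. Saying that ``exactly one of $awb,bwa$ fails to be Sturmian'' (which is correct, by Proposition~\ref{prop:bisp} and Lemma~\ref{lem:delmi}) and that this ``is compatible only with $w$ ending cleanly just after a $P$-block'' restates the claim to be proved; nothing in your text explains why a $w$ ending mid-block would violate bispeciality. What is actually needed, and what the paper supplies, is a case analysis: assuming $w$ deviates from the form $(uyx)^{n}u$ (the paper takes $u$ to be the longest palindromic prefix of $w$ that is also a suffix, writes $w=(uyx)^{k}u'av$ with $u'b\in\Pref(uyx)$ and $a\ne b$, and splits on $|u'|$ versus $|u|$ and on which letter $a$ is), one exhibits in $yw$ or in $wx$ two factors of equal length whose numbers of occurrences of a letter differ by $2$ (e.g.\ $yuy$ and $xux$, or $uyy$ and $xux$), contradicting the fact that all one-letter extensions of a bispecial Sturmian word are balanced. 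Note that the relevant hypothesis here is the Sturmian-ness of the single-letter extensions $yw$ and $wx$, not the count of surviving two-letter extensions; your appeal to the latter does not substitute for constructing an explicit unbalanced pair. Until that analysis is carried out (in your setup: ruling out that $w$ ends inside the separator $XY$ or strictly inside a $P$-block, and also ruling out a ``wrong letter'' at the first point of deviation), the inclusion $\BS\subseteq\{w\mid xwy\mbox{ is a Christoffel word}\}$ is not established.
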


\begin{proof}
 Let $xwy$ be a Christoffel word, $x,y\in \Sigma$.  Then, by Lemma \ref{lem:npChris}, $w$ is of the form $w=(uyx)^{n}u$, $n\ge 0$, for a central word $u$. By Lemma \ref{lem:rpl}, $w$ is a prefix of the central word $w^{(+)}$ and a suffix of the central word $w^{(-)}$, and therefore, by Proposition \ref{prop:sbscen} and Lemma \ref{lem:prefsuf}, $w$ is a bispecial Sturmian word.
 
 Conversely, let $w$ be a bispecial Sturmian word, that is, suppose that the words $xw$, $yw$, $wx$ and $wy$ are all Sturmian. If $w$ is strictly bispecial, then $w$ is a central word by Proposition \ref{prop:sbscen}, and $xwy$ is a (primitive) Christoffel word by Theorem \ref{theor:sbsCP}. So suppose $w\in \WBS$. By Lemma \ref{lem:npChris}, it is enough to prove that $w$ is of the form $w=(uyx)^{n}u$, $n\ge 1$, for a central word $u$ and letters $x\neq y$. Since $w$ is not a strictly bispecial Sturmian word, it is not a palindrome (by Proposition \ref{prop:sturmstrispe}). Let $u$ be the longest palindromic prefix of $w$ that is also a suffix of $w$, so that $w=uyzxu$, $x\neq y\in \Sigma$, $z\in \Sigma^{*}$. If $z=\varepsilon$, $w=uyxu$ and we are done. Otherwise, it must be $z=xz'y$ for some $z'\in \Sigma^{*}$, since otherwise either the word $yw$ would contain $yuy$ and $xxu$ as factors (a contradiction with the hypothesis that $yw$ is a Sturmian word) or the word $wx$ would contain $uyy$ and $xux$ as factors (a contradiction with the hypothesis that $wx$ is a Sturmian word). So $w=uyxz'yxu$. If $u=\varepsilon$, then it must be $z=(yx)^{k}$ for some $k\ge 0$, since otherwise either  $xx$ would appear as a factor in $w$, and therefore the word $yw$ would contain $xx$ and $yy$ as factors, being not a Sturmian word, or $yy$ would appear as a factor in $w$, and therefore the word $wx$ would contain $xx$ and $yy$ as factors, being not a Sturmian word. Hence, if $u=\epsilon$ we are done, and so we now suppose $|u|>0$. 
 
 By contradiction, suppose that $w$ is not of the form $w=(uyx)^{n}u$. That is, let $w=(uyx)^{k}u'av$, with $k\ge 1$, $v\in \Sigma^{*}$, $u'b\in \Pref(uyx)$, for different letters $a$ and $b$. If $|u'|\ge |u|$, then either $|u'|=|u|$ or $|u'|=|u|+1$. In the first case, $u'=u$ and $w=(uyx)^{k}uxv'$, for some $v'\in \Sigma^{*}$, and then the word $yw$ would contain $yuy$ and $xux$ as factors, being not a Sturmian word. In the second case, $u'=uy$ and $w=(uyx)^{k}uyyv''$, for some $v''\in \Sigma^{*}$; since $xu$ is a suffix of $w$, and therefore $w=(uyx)^{k}v'''xu$ for some $v'''\in \Sigma^{*}$, we would have that the word $wx$ contains both $uyy$ and $xux$ as factors, being not a Sturmian word. Thus, we can suppose $u'b\in \Pref(u)$. Now, if $a=x$ and $b=y$, then the word $yw$ would contain the factors $yu'y$ and $xu'x$, being not a Sturmian word; if instead $a=y$ and $b=x$, let $u=u'xu''$, so that we can write $w=(uyx)^{k}u'yv=(uyx)^{k-1}u'xu''yxu'yv$. The word $wx$ would therefore contain the factors $u''yxu'y$ and $xux=xu'xu''x$ (since $xu$ is a suffix of $w$), being not a Sturmian word (see~\figurename~\ref{fig:theorem}). In all the cases we obtain a contradiction and the proof is thus complete.
\end{proof}

\begin{figure}[ht]
\begin{center}
\includegraphics[height=42mm]{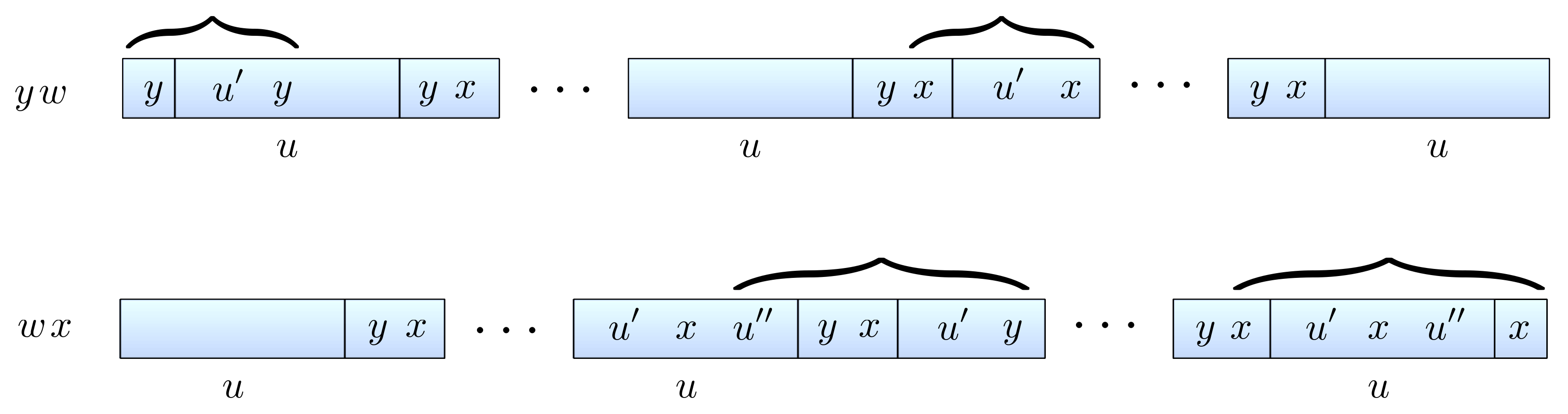}
\caption{An illustration of the proof of Theorem \ref{theor:main}.}
\label{fig:theorem}
\end{center}
\end{figure}

So, bispecial Sturmian words are the maximal internal factors of Christoffel words. Every bispecial Sturmian word is therefore of the form $w=(uyx)^{n}u$, $n\ge 0$, for different letters $x,y$ and a central word $u$. The word $w$ is strictly bispecial if and only if $n=0$. If $n=1$, $w$ is a semicentral word \cite{BuDelFi12}, i.e., a word in which the longest repeated prefix, the longest repeated suffix, the longest left special factor and the longest right special factor all coincide.

%%%%%%%%%%%%%%%%%%%%%%%%%%%%%%%%%%%%%%%%%%%%%%%%%%%%%%%%%%%%%%%%%%%%%%%%%%%%%%%%%%%%%%%%%%%%%%
\section{Enumeration of bispecial Sturmian words}\label{sec:En}
%%%%%%%%%%%%%%%%%%%%%%%%%%%%%%%%%%%%%%%%%%%%%%%%%%%%%%%%%%%%%%%%%%%%%%%%%%%%%%%%%%%%%%%%%%%%%%

In this section we give an enumerative formula for bispecial Sturmian words. It is known that the number of Sturmian words of length $n$ is given by
\begin{equation}\label{eq:St}
 St(n)=1+\sum_{i=1}^{n}(n-i+1)\phi(i),
\end{equation}
where $\phi$ is the Euler totient function, i.e., $\phi(n)$ is the number of positive integers smaller than or equal to $n$ and coprime with $n$ (cf.~\cite{Mig91,Lip82}).

Let $w$ be a Sturmian word of length $n$. If $w$ is left special, then $aw$ and $bw$ are Sturmian words of length $n+1$. If instead $w$ if not left special, then only one between $aw$ and $bw$ is a Sturmian word of length $n+1$. Therefore, we have $\LS(n)=St(n+1)-St(n),$ and hence
\begin{equation}
 \LS(n)=\sum_{i=1}^{n+1}\phi(i).
\end{equation}
Using a symmetric argument, one has that also 
\begin{equation}
 \RS(n)=\sum_{i=1}^{n+1}\phi(i).
\end{equation}
Since \cite{DelMi94} $\SBS(n)=\LS(n+1)-\LS(n)=\RS(n+1)-\RS(n)$, we obtain
\begin{equation}
 \SBS(n)=\phi(n+2).
\end{equation}
Therefore, in order to find an enumerative formula for bispecial Sturmian words, we only have to enumerate the non-strictly bispecial Sturmian words. We do it in the next proposition.

\begin{proposition}
For every $n>1$, one has 
\begin{equation}
 \WBS(n)=2(n+1-\phi(n+2)).
\end{equation}
\end{proposition}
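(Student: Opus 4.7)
The plan is to combine Theorem \ref{theor:main} and Lemma \ref{lem:npChris} to put every non-strictly bispecial Sturmian word in a canonical form, establish that this form is unique, and then evaluate the resulting divisor sum. By Theorem \ref{theor:main} every $w \in \BS$ has $xwy$ a Christoffel word for some $x,y \in \Sigma$, and by Lemma \ref{lem:npChris} this forces $w = (uyx)^m u$ for a central word $u$, distinct letters $x \ne y$, and some $m \ge 0$. The case $m = 0$ gives $w = u$, which is central and therefore strictly bispecial by Proposition \ref{prop:sbscen}. Conversely, for any central $u$, any $x \ne y$, and any $m \ge 1$, Theorem \ref{theor:main} says that $w = (uyx)^m u$ is bispecial; moreover $\tilde w = u(xyu)^m$ begins with $uxy$ while $w$ begins with $uyx$, so with $x \ne y$ the word $w$ is not a palindrome, and Proposition \ref{prop:sturmstrispe} places it in $\WBS$. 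Hence $\WBS$ is parametrized by tuples $(u, x, y, m)$ with $u$ central, $x \ne y$, and $m \ge 1$.

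Next I would show this parametrization is injective. Suppose $w = (uyx)^m u = (u' y' x')^{m'} u'$ with $|u| \le |u'|$ and $m, m' \ge 1$. The length identity $(m+1)(|u|+2) = (m'+1)(|u'|+2) = |w|+2$ forces $|u'|+2 = r(|u|+2)$ for some integer $r \ge 1$. If $r \ge 2$ then the prefix of $w$ of length $|u'|$ equals $(uyx)^{r-1} u$, which by the first step is non-strictly bispecial and therefore, by Proposition \ref{prop:sbscen}, not central, contradicting the centrality of $u'$. So $r = 1$, $|u| = |u'|$, $m = m'$, and then $u$, $y$, $x$ are recovered from the first $|u|+2$ letters of $w$.

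For the count, set $s = m+1 \ge 2$ and $d = |u|+2 \ge 2$, so that $sd = n+2$. Using the formula $\SBS(k) = \phi(k+2)$ established just above (so that the number of central words of length $d-2$ is $\phi(d)$) together with the two ordered choices for the pair $(y, x)$ with $y \ne x$, the above bijection gives
\[
\WBS(n) = 2 \sum_{\substack{d \mid n+2 \\ 2 \le d \le (n+2)/2}} \phi(d) = 2 \left( \sum_{d \mid n+2} \phi(d) - \phi(1) - \phi(n+2) \right).
\]
Applying Gauss's identity $\sum_{d \mid N} \phi(d) = N$ with $N = n+2$ reduces the right-hand side to $2((n+2) - 1 - \phi(n+2)) = 2(n+1 - \phi(n+2))$, as claimed.

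The principal obstacle is the uniqueness step; without it the enumeration would over-count tuples with different central kernels. The decisive fact that makes it work is that a proper iterate $(uyx)^{r-1} u$ of a central word (with $r \ge 2$) is non-strictly bispecial and hence not itself central, which is an immediate consequence of Proposition \ref{prop:sbscen} combined with the parametrization from the first step.
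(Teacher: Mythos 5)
Your route is genuinely different from the paper's: the paper simply observes that the bispecial Sturmian words of length $n$ are the maximal internal factors of the $n+1$ lower Christoffel words of length $n+2$ together with their reversals, and that exactly $\phi(n+2)$ of these are palindromes (the strictly bispecial ones), so no divisor sum or Gauss identity is needed. Your parametrization by tuples $(u,x,y,m)$ and the evaluation $2\sum_{d\mid n+2,\ 2\le d\le (n+2)/2}\phi(d)=2(n+1-\phi(n+2))$ is a valid alternative, and the surjectivity half (every non-strictly bispecial word is $(uyx)^m u$ with $m\ge 1$ and a central $u$, and every such word is non-strictly bispecial because it is bispecial but not a palindrome) is correctly argued.

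However, the injectivity step contains a genuine gap. From $(m+1)(|u|+2)=(m'+1)(|u'|+2)$ with $|u|\le|u'|$ you conclude that $|u'|+2$ is an integer multiple of $|u|+2$; this is not an arithmetic consequence of the length identity (compare $3\cdot 4=2\cdot 6$), so a configuration such as $|u|+2=4$, $m+1=3$, $|u'|+2=6$, $m'+1=2$ is not excluded by what you wrote, and the remainder of your argument never engages with it. The conclusion is nevertheless true, and a clean fix is available from the paper's own ingredients: one has $xwy=(xuy)^{m+1}$, where $xuy$ is a \emph{primitive} Christoffel word by Theorem \ref{theor:sbsCP} since $u$ is central; moreover, for $m\ge 1$ the word $w$ is not a palindrome, so $(x,y)=(x',y')$ (if both $awb$ and $bwa$ were Christoffel words, Lemma \ref{lem:rev} would force $w=\tilde w$), and then the uniqueness of the primitive root of the word $xwy$ yields $xuy=x'u'y'$ and $m=m'$ at once. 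Alternatively, injectivity is immediate in the paper's indexing, since distinct pairs $(p,q)$ with $p+q=n+2$ give Christoffel words with distinct numbers of $a$'s and hence distinct interiors.
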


\begin{proof}
Let \[W_{n}=\{w \mid \mbox{  $awb$ is a lower Christoffel word of length $n+2$} \},\] and \[W'_{n}=\{w' \mid \mbox{  $bw'a$ is an upper Christoffel word of length $n+2$} \}.\] By Theorem \ref{theor:main}, the bispecial Sturmian words of length $n$ are the words in $W_{n}\cup W'_{n}$. 

Among the $n+1$ words in $W_{n}$, there are $\phi(n+2)$ strictly bispecial Sturmian words, that are precisely the palindromes in $W_{n}$. The $n+1-\phi(n+2)$ words in $W_{n}$ that are not palindromes are non-strictly bispecial Sturmian words. The other non-strictly bispecial Sturmian words of length $n$ are the $n+1-\phi(n+2)$ words in $W'_{n}$ that are not palindromes. Since the words in $W'_{n}$ are the reversals of the words in $W_{n}$, and since no non-strictly bispecial Sturmian word is a palindrome by Proposition \ref{prop:sturmstrispe}, there are a total of $2(n+1-\phi(n+2))$ non-strictly bispecial Sturmian words of length $n$.
\end{proof}

\begin{corollary}\label{cor:formula}
 For every $n\ge 0$, there are $2(n+1)-\phi(n+2)$ bispecial Sturmian words of length $n$.
\end{corollary}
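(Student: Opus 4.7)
The plan is very short: this corollary is essentially a restatement of the sum $\BS(n) = \SBS(n) + \WBS(n)$, together with the formulas already derived. First I would recall that $\BS = \SBS \sqcup \WBS$, so for any $n$,
\[
\BS(n) = \SBS(n) + \WBS(n).
\]
For $n > 1$, substitute $\SBS(n) = \phi(n+2)$ and the formula from the preceding proposition, $\WBS(n) = 2(n+1-\phi(n+2))$. The $\phi(n+2)$ terms combine as $\phi(n+2) + 2(n+1) - 2\phi(n+2) = 2(n+1) - \phi(n+2)$, which is exactly the asserted count.

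The only remaining task is to check the two small cases $n = 0$ and $n = 1$ by direct inspection, since the proposition giving $\WBS(n)$ is stated only for $n > 1$. For $n = 0$, the empty word is trivially bispecial (indeed strictly bispecial, as all of $aa, ab, ba, bb$ are Sturmian), so $\BS(0) = 1$, which matches $2(0+1) - \phi(2) = 2 - 1 = 1$. For $n = 1$, both letters $a$ and $b$ are bispecial (in fact strictly bispecial, since any word in $\Sigma a \Sigma$ and $\Sigma b \Sigma$ is balanced), so $\BS(1) = 2$, which matches $2(1+1) - \phi(3) = 4 - 2 = 2$.

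I do not expect any real obstacle here: once the non-strictly bispecial count is in hand, the corollary is a one-line arithmetic consequence of $\BS(n) = \SBS(n) + \WBS(n)$, plus two trivial base-case verifications.
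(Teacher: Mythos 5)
Your proposal is correct and matches the paper's (implicit) argument: the corollary is just $\BS(n)=\SBS(n)+\WBS(n)=\phi(n+2)+2(n+1-\phi(n+2))=2(n+1)-\phi(n+2)$. Your extra verification of the cases $n=0$ and $n=1$, which the proposition on $\WBS(n)$ formally excludes, is a welcome bit of care that the paper glosses over.
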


\begin{example}
The Christoffel words of length $12$ and their maximal internal factors, the bispecial Sturmian words of length $10$, are reported in Table \ref{tab:example} (the strictly bispecial Sturmian words are underlined). 

\begin{table}[h]
\begin{center}
  \begin{tabular}{|c | c | c | }
  
\ Pair $(p,q)$ \ & \ Lower Christoffel word $w_{p,q}$ \ & \ Upper Christoffel word $w'_{p,q}$ \  \\    \hline 
 $(11,1)$ &    $a\underline{aaaaaaaaaa}b$   & $b\underline{aaaaaaaaaa}a$     \\
 $(10,2)$ &    $aaaaabaaaaab$   & $baaaaabaaaaa$       \\
 $(9,3)$  &    $aaabaaabaaab$   & $baaabaaabaaa$       \\
 $(8,4)$  &    $aabaabaabaab$   & $baabaabaabaa$       \\
 $(7,5)$  &    $a\underline{ababaababa}b$  & $b\underline{ababaababa}a$       \\
 $(6,6)$  &    $abababababab$  & $babababababa$   \\
 $(5,7)$  &    $a\underline{bababbabab}b$   & $b\underline{bababbabab}a$       \\
 $(4,8)$  &    $abbabbabbabb$   & $bbabbabbabba$       \\
 $(3,9)$  &    $abbbabbbabbb$   & $bbbabbbabbba$       \\
 $(2,10)$ &    $abbbbbabbbbb$   & $bbbbbabbbbba$       \\
 $(1,11)$ &    $a\underline{bbbbbbbbbb}b$   & $b\underline{bbbbbbbbbb}a$      \\    
 \hline 
  \end{tabular}\vspace{4mm}
\end{center}\caption{The Christoffel words of length $12$. Their maximal internal factors are the bispecial Sturmian words of length $10$. There are $4=\phi(12)$ strictly bispecial Sturmian words, that are the palindromes $aaaaaaaaaa$, $ababaababa$, $bababbabab$ and $bbbbbbbbbb$ (underlined), and $14=2(11-4)$ non-strictly bispecial Sturmian words: $aaaaabaaaa$, $aaaabaaaaa$, $aaabaaabaa$, $aabaaabaaa$, $aabaabaaba$, $abaabaabaa$, $ababababab$, $bababababa$, $babbabbabb$, $bbabbabbab$, $bbabbbabbb$, $bbbabbbabb$, $bbbbabbbbb$ and  $bbbbbabbbb$.\label{tab:example}}
\end{table}
\end{example}

%%%%%%%%%%%%%%%%%%%%%%%%%%%%%%%%%%%%%%%%%%%%%%%%%%%%%%%%%%%%%%%%%%%%%%%%%%%%%%%%%%%%%%%%%%%%%%
\section{Minimal forbidden words}\label{sec:MF}
%%%%%%%%%%%%%%%%%%%%%%%%%%%%%%%%%%%%%%%%%%%%%%%%%%%%%%%%%%%%%%%%%%%%%%%%%%%%%%%%%%%%%%%%%%%%%%

Recall that a language $L$ is called factorial if it contains all the factors of its words (i.e., for every words $u,v$ such that $uv\in L$ one has that $u$ and $v$ belong to $L$), and extendible if every of its words has arbitrarily long extensions on the left and on the right (i.e., for every $u\in L$, there exist letters $x$ and $y$ such that $xuy\in L$). 

Given a factorial language $L$ over an alphabet $\Sigma$, a word $w\in \Sigma^{*}$ is a minimal forbidden word for $L$ if $w$ does not belong to $L$ but every proper factor of $w$ does. The set $\M$ of minimal forbidden words for $L$ is therefore defined by the equation:
\begin{equation}
 \M=\Sigma L \cap L \Sigma \cap (\Sigma^{*}\setminus L).
\end{equation}

Note that given a word $w$ in a factorial and extendible language $L$ over an alphabet $\Sigma$, if $w$ is a maximal internal factor of a minimal forbidden word for $L$, then $w$ is a bispecial word in $L$. In fact, let $x,y\in \Sigma$ be such that $xwy$ is a minimal forbidden word for $L$. By the definition of minimal forbidden word, $xw$ and $wy$ belong to $L$. Since $L$ is extendible, there is a letter $y'\neq y$ in $\Sigma$ such that $xwy'\in L$; symmetrically, there is a letter $x'\neq x\in \Sigma$ such that $x'wy\in L$. Since $L$ is factorial, $wy$, $wy'$, $xw$ and $x'w$ belong to $L$, and therefore $w$ is bispecial in $L$. 
 However, the converse is not true: if $w$ is a strictly bispecial word in $L$, then it cannot be the maximal internal factor of a minimal forbidden word for $L$.

In the next theorem, we give a characterization of the set $\MF$ of minimal forbidden words for the language $\St$ of finite Sturmian words. 

\begin{theorem}\label{theor:mf}
 $\MF=\{ywx \mid xwy \mbox{ is a non-primitive Christoffel word, $x,y\in \Sigma$}\}.$
\end{theorem}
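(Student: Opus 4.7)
The plan is to prove the equality by two inclusions, using Theorem~\ref{theor:main} and Lemma~\ref{lem:npChris} as the structural tools, with Proposition~\ref{prop:bisp} handling the letter bookkeeping.

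For the inclusion $\supseteq$, I would start from a non-primitive Christoffel word $xwy$. By Lemma~\ref{lem:npChris}, $x \ne y$ and $w = (uyx)^{n} u$ for some central word $u$ and integer $n \ge 1$. Reading prefixes and suffixes of $ywx$ off this representation, the word $yuy$ appears as a prefix of length $|u|+2$ (the initial $y$, then $u$, then the $y$ opening the first $yx$-block of $w$), and $xux$ appears as a suffix of the same length. Since $|yuy|_{y} - |xux|_{y} = 2$, the word $ywx$ fails the balance condition and is not Sturmian. On the other hand, Theorem~\ref{theor:main} says $w$ is bispecial, so both $yw$ and $wx$ lie in $\St$; combined with factoriality of $\St$, this gives that every proper factor of $ywx$ is Sturmian. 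Hence $ywx \in \MF$.

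For the inclusion $\subseteq$, start from $v = ywx \in \MF$ with first letter $y$ and last letter $x$. The extendibility argument recalled just before the statement shows $w$ is bispecial in $\St$. Since $ywx \notin \St$, Proposition~\ref{prop:bisp} rules out strict bispeciality, so $w$ is non-strictly bispecial; the proof of that proposition (via Lemma~\ref{lem:delmi}) further shows that the unique missing extension of $w$ comes from the pair $\{awb,bwa\}$, so in particular $y \ne x$. Theorem~\ref{theor:main} supplies letters $x^{*}, y^{*}$ with $x^{*} w y^{*}$ a Christoffel word, and since every Christoffel word begins and ends with different letters, $x^{*} w y^{*} \in \{awb,bwa\}$. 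Because $x^{*} w y^{*} \in \St$ while $ywx \notin \St$, these must be the two different elements of $\{awb,bwa\}$, forcing $(x^{*}, y^{*}) = (x, y)$. Finally, since $w$ is non-strictly bispecial it is not central, so by Theorem~\ref{theor:sbsCP} the Christoffel word $xwy$ is not primitive.

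The main obstacle I anticipate is precisely this letter-pairing in the second inclusion: matching the first and last letters of the minimal forbidden word to the last and first letters of the associated Christoffel word. The pairing works only because Proposition~\ref{prop:bisp} (through Lemma~\ref{lem:delmi}) confines the unique missing extension of a non-strictly bispecial word to the pair $\{awb,bwa\}$, which is the same pair in which every Christoffel extension of $w$ must live; this leaves only one consistent assignment of letters. Once this is pinned down, non-primitivity is immediate from the Berstel--de~Luca correspondence (Theorem~\ref{theor:sbsCP}), and the balance-violation step in the first inclusion is a direct computation from the $(uyx)^{n}u$ template.
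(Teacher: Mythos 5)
Your proof is correct, and its overall architecture (two inclusions, driven by Theorem~\ref{theor:main}, Theorem~\ref{theor:sbsCP} and the dichotomy of Proposition~\ref{prop:bisp}/Lemma~\ref{lem:delmi}) matches the paper's. There are two local differences worth noting. First, to show $ywx\notin\St$ in the $\supseteq$ direction, you compute directly from the template $w=(uyx)^{n}u$ of Lemma~\ref{lem:npChris}, exhibiting the unbalanced pair of factors $yuy$ (prefix) and $xux$ (suffix) of length $|u|+2$; the paper instead argues by contradiction via Lemma~\ref{lem:delmi} (if $ywx$ were Sturmian, then together with the Christoffel word $xwy$ it would force $w$ to be strictly bispecial). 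Your computation is more elementary and has the small advantage of not relying on the unstated fact that Christoffel words are themselves Sturmian -- though you do invoke that fact later, in the $\subseteq$ direction, when you assert $x^{*}wy^{*}\in\St$. Second, in the $\subseteq$ direction the paper first disposes of the case $x=y$ by a separate extendibility argument and then passes rather tersely from ``$w$ is non-strictly bispecial'' to ``$xwy$ is a non-primitive Christoffel word''; you instead get $x\neq y$ for free from Proposition~\ref{prop:bisp} (the unique missing extension of a non-strictly bispecial word lies in $\{awb,bwa\}$) and then spell out the letter-pairing step that the paper leaves implicit, namely that the Christoffel extension supplied by Theorem~\ref{theor:main} must be exactly $xwy$ because it is the Sturmian element of $\{awb,bwa\}$ while $ywx$ is the non-Sturmian one. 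This makes your write-up slightly longer but arguably more complete at the one point where the paper's proof is compressed.
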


\begin{proof}
If $xwy$ is a non-primitive Christoffel word, then by Theorems \ref{theor:sbsCP} and  \ref{theor:main}, $w$ is a non-strictly bispecial Sturmian word. This implies that $ywx$ is not a Sturmian word, otherwise by Lemma \ref{lem:delmi} we have a contradiction. Since $yw$ and $wx$ are Sturmian words, we have $ywx\in \MF$.

Conversely, let $ywx\in \MF$. We claim that $x$ and $y$ are different letters. In fact, suppose that $ywx=awa$ (the case $ywx=bwb$ is symmetric). Then, by the definition of minimal forbidden word, $aw$ and $wa$ are Sturmian words, and since $\St$ is extendible, $awb$ and $bwa$ are Sturmian. By Lemma \ref{lem:delmi}, $w$ is strictly bispecial, and thus $awa$ is Sturmian, a contradiction. So we can suppose that $x\neq y$. Therefore, we have that $ywy$ and $xwx$ are Sturmian words. Thus, $w$ is a bispecial Sturmian word, and since $ywx\notin \St$, $w$ is a  non-strictly bispecial Sturmian word. By Theorems \ref{theor:sbsCP} and \ref{theor:main}, $xwy$ is a non-primitive Christoffel word.
\end{proof}

\begin{corollary}\label{cor:formulamf}
 For every $n>1$, one has 
\begin{equation}\label{eq:MF}
  \MF(n)=2(n-1-\phi(n)).
\end{equation}
\end{corollary}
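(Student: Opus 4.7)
The plan is to derive the formula directly from Theorem \ref{theor:mf} by a counting argument. Theorem \ref{theor:mf} asserts that the map $xwy \mapsto ywx$ (swap of the first and last letters) sends non-primitive Christoffel words of length $n$ surjectively onto $\MF(n)$. This map is clearly injective — from $ywx$ one recovers $xwy$ unambiguously — so $|\MF(n)|$ equals the number of non-primitive Christoffel words of length $n$. Note also that in any Christoffel word, the first and last letters are distinct (a lower Christoffel word has the form $aub$, an upper Christoffel word the form $bua$), so the decomposition $xwy$ is genuine with $x\neq y$ and no separate case analysis is needed.

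Next I would count Christoffel words of length $n$. For each ordered pair $(p,q)$ of positive integers with $p+q=n$, there is exactly one lower Christoffel word $w_{p,q}$ (starting with $a$) and one upper Christoffel word $w'_{p,q}$ (starting with $b$); these two are distinct because they begin with different letters. Since the number of such pairs is $n-1$, the total number of Christoffel words of length $n$ is $2(n-1)$.

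Then I would count the primitive ones. By the remark following the definition, $w_{p,q}$ (resp.\ $w'_{p,q}$) is primitive if and only if $\gcd(p,q)=1$. Using $\gcd(p,q)=\gcd(p,p+q)=\gcd(p,n)$, the number of pairs $(p,q)$ with $p+q=n$, $1\le p\le n-1$, and $\gcd(p,q)=1$ equals the number of $p\in\{1,\dots,n-1\}$ coprime with $n$, namely $\phi(n)$ (valid for $n>1$). So there are $\phi(n)$ primitive lower and $\phi(n)$ primitive upper Christoffel words of length $n$, giving $2\phi(n)$ primitive Christoffel words of length $n$ in total.

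Subtracting, the number of non-primitive Christoffel words of length $n$ is $2(n-1)-2\phi(n)=2(n-1-\phi(n))$, and by the bijection above this equals $\MF(n)$, which is the desired formula \eqref{eq:MF}. There is no real obstacle here; the only point demanding a moment of care is verifying that the correspondence $xwy \leftrightarrow ywx$ is a genuine bijection (not a two-to-one map identifying a lower Christoffel word with a reversed upper Christoffel word), which follows at once because the two sides of the map have distinct first letters whenever $x\neq y$.
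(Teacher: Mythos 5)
Your proof is correct and follows exactly the route the paper intends: the corollary is stated as an immediate consequence of Theorem \ref{theor:mf}, obtained by counting the $2(n-1)$ Christoffel words of length $n$ and subtracting the $2\phi(n)$ primitive ones, with the first/last-letter swap giving a bijection onto $\MF\cap\Sigma^{n}$. The points you single out for care (injectivity of the swap, $\gcd(p,q)=\gcd(p,n)$, and the validity of the count $\phi(n)$ for $n>1$) are exactly the right ones and are handled correctly.
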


It is known from \cite{Mig91} that $\St(n)=O(n^{3})$, as a consequence of \eqref{eq:St} and of the estimation (see~\cite{HaWr}, p. 268)
\begin{equation}\label{eq:HW}
 \sum_{i=1}^{n}\phi(i)=\frac{3n^{2}}{\pi^{2}}+O(n\log n).
\end{equation}
So, we have that the number of Sturmian words of length smaller than or equal to $n$ is $\sum_{i=1}^{n}\St(i)=O(n^{4})$.
From \eqref{eq:HW} and \eqref{eq:MF}, we have that the number of minimal forbidden words of length smaller than or equal to $n$ is $\sum_{i=1}^{n}\MF(n)=O(n^{2})$.

%%%%%%%%%%%%%%%%%%%%%%%%%%%%%%%%%%%%%%%%%%%%%%%%%%%%%%%%%%%%%%%%%%%%%%%%%%%%%%%%%%%%%%%%%%%%%%%
\section{Conclusions}
%%%%%%%%%%%%%%%%%%%%%%%%%%%%%%%%%%%%%%%%%%%%%%%%%%%%%%%%%%%%%%%%%%%%%%%%%%%%%%%%%%%%%%%%%%%%%%%

We studied the combinatorics of bispecial Sturmian words, and their relations with Christoffel words and digital approximations of segments in the plane. A natural question is that of extending this setting to higher dimensions, that is, to alphabets of cardinality greater than $2$. There exist several generalizations of the definition of infinite Sturmian word to larger alphabets, each of which captures only a part of the properties of these words. 

The most studied generalization is the notion of episturmian word. An infinite word over an alphabet of cardinality greater than 2 is episturmian if it has at most one left special factor for each length and the set of its factors is closed under reversal (see \cite{GlJu09} for a survey on episturmian words). Unfortunately, episturmian words are not necessarily balanced, and even balanced ones (G.~Paquin and L.~Vuillon \cite{PaVu07} gave a combinatorial characterization of balanced episturmian words) do not seem to correspond precisely to the approximations of the straight lines---see the introduction of \cite{BeLa11} for more references.

Another generalization of Sturmian words in higher dimensions are billiard words. Billiard words  have the property that they are represented by a path that lies at bounded distance from a Euclidean straight line. Billiard words are moreover balanced, but in dimension higher than $2$ they do not have linear complexity of factors  \cite{Ar94}---which is a fundamental property of Sturmian words. 

A generalization of the notion of balancedness is $C$-balancedness. Given an integer $C>0$, a word is said to be $C$-balanced if for any pair of factors of the same length $u$ and $v$, and for any letter $a$, one has $||u|_{a}-|v|_{a}|\leq C$. So what we referred to in this paper as balanced words are precisely the $1$-balanced words.
V. Berth\'e and S. Labb\'e raised the question whether it is possible to construct $C$-balanced words with linear factor complexity that have prescribed letter frequencies \cite{BeLa11a}. The same authors also presented interesting results about the algorithmic generation of digital approximations of segments in the 3-dimensional space \cite{BeLa11}. 

We believe that an approach based on the Combinatorics on Words can lead to further insights on problems of digital approximation of lines in the 3-dimensional space.

%%%%%%%%%%%%%%%%%%%%%%%%%%%%%%%%%%%%%%%%%%%%%%%%%%%%%%%%%%%%%%%%%%%%%%%%%%%%%%%%%%%%%%%%%%%%%%
\section*{Acknowledgements}
%%%%%%%%%%%%%%%%%%%%%%%%%%%%%%%%%%%%%%%%%%%%%%%%%%%%%%%%%%%%%%%%%%%%%%%%%%%%%%%%%%%%%%%%%%%%%%

The author warmly thanks  Alessandro De Luca, S\'ebastien Labb\'e, Filippo Mignosi and Antonio Restivo for fruitful discussions. This work has been partially supported by PRIN 2010/2011 project ``Automi e Linguaggi Formali: Aspetti Matematici e Applicativi'' of the Italian Ministry of Education (MIUR).

\end{document}